\pgfplotsset{compat=1.4}
\newcommand{\eps}{\varepsilon}
\newcommand{\eff}{\mathcal{F}}
\newcommand{\rr}{\mathbb{R}}
\newcommand{\qq}{\mathbb{Q}}
\newcommand{\rrp}{\rr_{\ge 0}}
\newcommand{\vpp}{\mathcal{P}_2[V]}
\DeclareMathOperator{\supp}{supp}
\DeclareMathOperator{\dist}{dist}
\newtheorem{theorem}{Theorem}
\newtheorem{lemma}[theorem]{Lemma}
\newtheorem{fact}[theorem]{Fact}
\newtheorem{corollary}[theorem]{Corollary}
\newtheorem{definition}[theorem]{Definition}
\definecolor{purple}{RGB}{51,34,136}
\definecolor{blue}{RGB}{136,204,238}
\definecolor{teal}{RGB}{68,170,153} 
\definecolor{green}{RGB}{17,119,51}
\definecolor{gold}{RGB}{153,153,51} 
\definecolor{yellow}{RGB}{221,204,119}
\definecolor{orange}{RGB}{204,102,119} 
\definecolor{red}{RGB}{136,34,85}
\definecolor{pink}{RGB}{170,68,153}
\numberwithin{theorem}{section}
\newtheorem*{definition*}{Definition}
\newtheorem*{theorem*}{Theorem}
\newtheorem*{lemma*}{Lemma}
\title{On the Structure of Unique Shortest Paths in Graphs}
\author{Greg Bodwin\thanks{bodwin@umich.edu.  Work performed in part while employed by MIT EECS.} }
\affil{University of Michigan EECS}
\date{}
\begin{document}
\maketitle
\thispagestyle{empty}

\begin{abstract}
We study the combinatorial structure of systems of unique shortest paths in real-weighted graphs.
We say that such a path system is \emph{strongly metrizable}.

A folklore fact is that every strongly metrizable path system must be \emph{simple} and \emph{consistent}, meaning that it avoids paths that repeat nodes, and it also avoids the pattern where a pair of its paths intersect, split apart, and then intersect again later.
Our contribution is to fully characterize strong metrizability via an expanded list of forbidden subsystems, beyond the two implied by simplicity and consistency.
That is:
\begin{itemize}
\item We define a new family of patterns that we call \emph{polyhedral pairs}, which are derived from 2-colored abstract polyhedra,

\item We prove that a directed path system is strongly metrizable via a directed graph if and only if it is simple, consistent, and it does not contain the nontrivial image of either side of an \emph{oriented} polyhedral pair as a subsystem,

\item We prove that an undirected path system is strongly metrizable via an undirected graph if and only if it is simple, consistent, and it does not contain the nontrivial image of either side of a \emph{non-oriented} polyhedral pair as a subsystem.
\end{itemize}
We also discuss some aesthetic structural properties that can be forced for these polyhedra, and we discuss improvements to the characterization that can be obtained in the directed acyclic setting.
\end{abstract}

\pagebreak

\thispagestyle{empty}
\tableofcontents
\setcounter{page}{0}
\pagebreak


\section{Introduction}

Many problems in graph theory and algorithms seek to understand the shortest paths of an input graph.
In the setting where the input graph can have arbitrary real edge weights, it is typical to assume without loss of generality that the graph has a \emph{unique} shortest path between all pairs of nodes, e.g.\ because we may add a tiny random variable to each edge weight to break ties.
The goal of this paper is to study the combinatorial structure of these unique shortest path systems.
Formally, we are interested in which \emph{path systems} are \emph{strongly metrizable}:

\begin{definition} [Path Systems]
A \emph{path system} is a pair $S = (V, \Pi)$, where $V$ is a set whose elements are called ``nodes'' and $\Pi$ is a set of nonempty sequences of nodes called ``paths.''
\end{definition}

\begin{definition} [Strong Metrizability] \label{def:introsm}
A path system $S = (V, \Pi)$ is strongly metrizable if there exists a directed graph $G = (V, E, w)$, with arbitrary real edge weights,\footnote{If $G$ contains a directed cycle of nonpositive weight, then none of these nodes can participate in a unique shortest path, since a path can always take another lap around this cycle without increasing its length.  Thus, while not strictly necessary in our formalisms, one may essentially require w.l.o.g.\ that all directed cycles in $G$ have positive weight.} such that each path $\pi \in \Pi$ is the unique shortest path between its endpoints in $G$.
\end{definition}

For (non-)examples, it is well known folklore that neither of the following two small path systems is strongly metrizable:

\begin{figure}[h!]
\begin{center}
\begin{tikzpicture}

\begin{scope}[yshift=-0.55cm]
  \draw[fill=black] (-4,0) circle[radius=0.15cm];
  \node at (-4.4,0) {$a$};

  \draw[ultra thick, ->, blue]
    (-4,0.08) arc[start angle=-90, end angle=270, radius=0.55cm];
\end{scope}

\node [align=center] at (-4,-1.5) {the self-loop:\\$V = \{a\}, \Pi = \{(aa)\}$};

\draw[fill=black] (0,0) circle[radius=0.15cm];
\node at (-0.2,0.40) {$a$};

\draw[fill=black] (4.3,0) circle[radius=0.15cm];
\node at (4.5,0.40) {$c$};

\draw[ultra thick, ->, red]
  (-0.5,0.2) to[bend right=30]
  node[
    midway,
    circle,
    fill=black,
    inner sep=0pt,
    minimum size=0.3cm,
    label={[text=black]above:{$b$}}
  ] {}
  (5,0.4);

\draw[ultra thick, ->, blue]
  (-0.5,-0.2) to[bend left=30] (5,-0.4);

\node [align=center] at (2.25,-1.5) {the inconsistency:\\$V=\{a,b,c\}, \Pi=\{(ac),(abc)\}$};
\end{tikzpicture}
\end{center}
\caption{\label{fig:consistent} Neither ``the self-loop'' nor ``the inconsistency'' is strongly metrizable.}
\end{figure}

These small path systems may each be viewed as a gesture towards a much broader class of path systems that ``contain'' the depicted system, and thus are not strongly metrizable.
Let us say that a path system $S$ is \emph{simple} if none of its paths contain the same node more than once, and $S$ is \emph{consistent} if, for any pair of distinct nodes $(a, c)$, all of its paths use the same $a \leadsto c$ subpath.
It is well-known folklore that every strongly metrizable path system must be simple and consistent.
The self-loop and inconsistency are the minimal examples of path systems that fail to be simple and consistent, respectively.
We will formalize this direction using the following notion of \emph{subsystems}:
\begin{definition} [Subsystems]
A path system $S = (V, \Pi)$ is a \emph{subsystem} of another path system $S' = (V', \Pi')$, written $S \subseteq S'$, if one can reach $S$ from $S'$ by zero or more applications of the following operations:
\begin{itemize}
\item Delete a path $\pi \in \Pi'$,

\item Delete a node $v \in V$ (and delete $v$ from all paths that contain it; for example, a path $(uvxy)$ would become $(uxy)$),

\item Choose a path $\pi \in \Pi'$ and an internal vertex $v \in \pi$ (neither the first nor the last vertex of $\pi$).
Split $\pi$ into two new paths: a contiguous prefix ending at $v$, and a contiguous suffix beginning at $v$.
\end{itemize}
\end{definition}

A fact implicit in prior work is that strong metrizability is \emph{inherited} by subsystems, in the sense of the following theorem.
\begin{theorem} [Folklore]
If $S$ is strongly metrizable and $S' \subseteq S$, then $S'$ is also strongly metrizable.
\end{theorem}
\begin{proof}
Let $G = (V, E, w)$ be a graph witnessing strong metrizability of $S$.
When we delete or split a path in $S$, it is clear that the paths of $S$ are still unique shortest paths in $G$.
When we delete a vertex $v$, we modify $G$ with a \emph{weighted contraction} of $v$: for each current in-neighbor $u$ and out-neighbor $x$ of $v$, add a new edge $(u, x)$, with weight set to $w(u, x) := w(u, v) + w(v, x)$ (or, if the edge $(u, x)$ is already present, set its weight to the minimum of its current value and this new value).
After this modification, the paths of $S$ will still be unique shortest paths in $G$.
\end{proof}

Since strong metrizability is hereditary under subsystems, we may therefore hope for a \emph{forbidden subsystem} characterization of strong metrizability.
In other words, our aim is to fully list the minimal path systems that are consistent but not strongly metrizable.
Thus, it will follow that a general path system $S$ is strongly metrizable if and only if it is consistent and it avoids all subsystems from our list.

\subsection{New Forbidden Subsystems}

The starting point for our list is the following path system.
It is simple and consistent, but not strongly metrizable, meaning that it reflects a new forbidden subsystem for unique shortest paths.

\boldmath
\begin{figure}[h!]
\begin{center}
\begin{tikzpicture}
\draw [fill=black] (0, 0) circle [radius=0.15cm];
\draw [fill=black] (1, 0) circle [radius=0.15cm];
\draw [fill=black] (0, 1) circle [radius=0.15cm];
\draw [fill=black] (1, 1) circle [radius=0.15cm];
\draw [fill=black] (0, 2) circle [radius=0.15cm];
\draw [fill=black] (1, 2) circle [radius=0.15cm];

\node [left=0.15cm] at (0, 0) {$a$};
\node [left=0.15cm] at (0, 1) {$c$};
\node [left=0.15cm] at (0, 2) {$e$};
\node [right=0.15cm] at (1, 0) {$b$};
\node [right=0.15cm] at (1, 1) {$d$};
\node [right=0.15cm] at (1, 2) {$f$};

\draw [ultra thick, red, ->] plot [smooth] coordinates {(0, -0.5) (0, 2.5)};
\draw [ultra thick, gold, ->] plot [smooth] coordinates {(-0.5, -0.5) (0, 0) (1, 1) (1, 2) (1, 2.5)};
\draw [ultra thick, green, ->] plot [smooth] coordinates {(1, -0.5) (1, 0) (1, 1) (0, 2) (-0.5, 2.5)};
\draw [ultra thick, blue, ->] plot [smooth] coordinates {(1.5, -0.5) (1, 0) (0, 1) (1, 2) (1.5, 2.5)};

\node at (0.5, -1) {$\pi_r = (ace), \pi_y = (adf), \pi_g = (bde), \pi_b = (bcf)$};
\end{tikzpicture}
\end{center}
\caption{\label{fig:tangle} A new consistent path system that is not strongly metrizable.}
\end{figure}
\unboldmath

\begin{theorem} \label{thm:notangle}
The path system $S$ depicted\footnote{Acknowledgment: this and other figures in this paper use a color-safe palette of Tol \cite{Tol21}.} in Figure \ref{fig:tangle} is not strongly metrizable.
\end{theorem}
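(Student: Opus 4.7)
The plan is to handle the two halves separately: consistency by direct inspection, and the failure of strong metrizability via a symmetric summation of unique-shortest-path inequalities.

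For consistency, I would enumerate the $\binom{4}{2}=6$ pairs of paths and observe that each pair intersects in exactly one node: $\pi_r \cap \pi_y = \{a\}$, $\pi_r \cap \pi_g = \{e\}$, $\pi_r \cap \pi_b = \{c\}$, $\pi_y \cap \pi_g = \{d\}$, $\pi_y \cap \pi_b = \{f\}$, $\pi_g \cap \pi_b = \{b\}$. Since no two distinct paths share even two nodes, there is no pair $(u,v)$ appearing in the same order on two distinct paths of $\Pi$, so consistency holds vacuously (and each path is simple, so the single-path case is also fine).

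For non-metrizability, I would suppose toward contradiction that some directed graph $G = (V, E, w)$ realizes $\mathop{OCT_1}$ as unique shortest paths. The eight edges $ac, ad, bc, bd, ce, cf, de, df$ are all forced into $E$ since each appears on some $\pi \in \Pi$. Crucially, each of the four paths admits a \emph{twin} path on the same endpoints, obtained by swapping the middle node $c \leftrightarrow d$, and each such twin uses only edges already forced into $E$: namely $(a,c,e)$ has twin $(a,d,e)$; $(a,d,f)$ has twin $(a,c,f)$; $(b,d,e)$ has twin $(b,c,e)$; and $(b,c,f)$ has twin $(b,d,f)$. The unique-shortest-path hypothesis therefore yields four strict inequalities:
\begin{align*}
w(ac)+w(ce) &< w(ad)+w(de),\\
w(ad)+w(df) &< w(ac)+w(cf),\\
w(bd)+w(de) &< w(bc)+w(ce),\\
w(bc)+w(cf) &< w(bd)+w(df).
\end{align*}

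Summing these four inequalities, the multiset of edges on each side is identical (each of the eight edges appears exactly once on each side), so the totals match and we derive a strict inequality between two equal quantities, a contradiction. The only place I would pause is to confirm that these twin-path inequalities remain valid regardless of what extra edges $G$ may contain beyond the forced eight: this is immediate because each twin uses only forced edges and so is an actual path in every candidate $G$, while any extra edges could only introduce additional competitors for ``shortest $s\leadsto t$ path'' and never remove the twins. I therefore expect the cancellation-by-symmetry in the summation step to be the crux of the argument, with everything else reducing to straightforward bookkeeping.
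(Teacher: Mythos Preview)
Your proposal is correct and follows essentially the same approach as the paper: verify consistency by inspection, then derive four strict inequalities by comparing each path to its ``twin'' with the middle node swapped, and sum to obtain a contradiction. Your additional remarks (explicitly listing the pairwise intersections for consistency, and noting that the twin paths use only forced edges so the inequalities hold regardless of extra edges in $G$) are sound elaborations of points the paper leaves implicit.
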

\begin{proof}
Suppose towards contradiction that there is a directed graph $G = (V, E, w)$ in which all paths in $S$ are unique shortest paths.
We then have the following four inequalities, obtained by comparing the lengths of our four paths to alternate paths among the same endpoints:
\begin{enumerate}
\item [$\pi_r$:] $w(a,c) + w(c,e) < w(a,d) + w(d,e)$
\item [$\pi_y$:] $w(a,d) + w(d,f) < w(a,c) + w(c,f)$
\item [$\pi_g$:] $w(b,d) + w(d,e) < w(b,c) + w(c,e)$
\item [$\pi_b$:] $w(b,c) + w(c,f) < w(b,d) + w(d,f)$
\end{enumerate}
However, each term appears exactly once on the left and once on the right across these four inequalities.
Thus, summing the inequalities and canceling terms leads to the contradiction $0<0$.
\end{proof}

In order to find additional forbidden subsystems, it is helpful to reimagine the previous proof by the picture in Figure \ref{fig:tangleoct}.
The path system $S$ corresponds to the $2$-colored octahedron shown in this figure in a sense that we will now make precise.

\FloatBarrier

\boldmath
\begin{figure}[h]
\begin{center}
\begin{tikzpicture}[scale=1.25]
\draw [fill=black] (0, 0) circle [radius=0.1cm];
\draw [fill=black] (1, 0) circle [radius=0.1cm];
\draw [fill=black] (0, 1) circle [radius=0.1cm];
\draw [fill=black] (1, 1) circle [radius=0.1cm];
\draw [fill=black] (0, 2) circle [radius=0.1cm];
\draw [fill=black] (1, 2) circle [radius=0.1cm];

\node [left=0.15cm] at (0, 0) {$a$};
\node [left=0.15cm] at (0, 1) {$c$};
\node [left=0.15cm] at (0, 2) {$e$};
\node [right=0.15cm] at (1, 0) {$b$};
\node [right=0.15cm] at (1, 1) {$d$};
\node [right=0.15cm] at (1, 2) {$f$};

\draw [ultra thick, red, ->] plot [smooth] coordinates {(0, -0.5) (0, 2.5)};
\draw [ultra thick, gold, ->] plot [smooth] coordinates {(-0.5, -0.5) (0, 0) (1, 1) (1, 2) (1, 2.5)};
\draw [ultra thick, green, ->] plot [smooth] coordinates {(1, -0.5) (1, 0) (1, 1) (0, 2) (-0.5, 2.5)};
\draw [ultra thick, blue, ->] plot [smooth] coordinates {(1.5, -0.5) (1, 0) (0, 1) (1, 2) (1.5, 2.5)};

\draw [fill=black] (2.5, 0.8) -- (2.5, 1.2) -- (4.5, 1.2) -- (4.5, 1.5) -- (4.8, 1) -- (4.5, 0.5) -- (4.5, 0.8) -- cycle;
\end{tikzpicture}%
\hspace{1cm}%
\begin{tikzpicture}[scale=0.75]

\draw [fill=gold, opacity=0.7] (-2, 0) -- (0.4, 0.4) -- (0, 2) -- cycle;
\draw [fill=blue, opacity=0.7] (0.4, 0.4) -- (2, 0) -- (0, -2) -- cycle;
\draw [fill=gray, opacity=0.7] (0.4, 0.4) -- (2, 0) -- (0, 2) -- cycle;
\draw [fill=gray, opacity=0.7] (0.4, 0.4) -- (-2, 0) -- (0, -2) -- cycle;

\draw [fill=red, opacity=0.7] (-0.4, -0.4) -- (0, 2) -- (2, 0) -- cycle;
\draw [fill=green, opacity=0.7] (-0.4, -0.4) -- (-2, 0) -- (0, -2) -- cycle;
\draw [pattern=north west lines, pattern color=black, thick] (-2, 0) -- (-0.4, -0.4) -- (0, 2) -- cycle;
\draw [pattern=north east lines, pattern color=black, thick] (-0.4, -0.4) -- (2, 0) -- (0, -2) -- cycle;

\draw [fill=black] (-2, 0) circle [radius=0.15cm];
\draw [fill=black] (2, 0) circle [radius=0.15cm];
\draw [fill=black] (0, 2) circle [radius=0.15cm];
\draw [fill=black] (0, -2) circle [radius=0.15cm];
\draw [fill=black] (-0.4, -0.4) circle [radius=0.15cm];
\draw [fill=black] (0.4, 0.4) circle [radius=0.15cm];

\draw [ultra thick] (-2, 0) -- (0, 2) -- (2, 0) -- (0, -2) -- (-2, 0);
\draw [ultra thick] (-2, 0) -- (-0.4, -0.4) -- (2, 0);
\draw [ultra thick] (0, 2) -- (-0.4, -0.4) -- (0, -2);

\draw [ultra thick, dashed] (-2, 0) -- (0.4, 0.4) -- (2, 0);
\draw [ultra thick, dashed] (0, 2) -- (0.4, 0.4) -- (0, -2);

\node at (0, 2.4) {$a$};
\node at (-0.65, -0.65) {$e$};
\node at (0, -2.4) {$b$};
\node at (-2.4, 0) {$d$};
\node at (2.4, 0) {$c$};
\node at (0.65, 0.65) {$f$};

\end{tikzpicture}
\end{center}
\caption{\label{fig:tangleoct} A path system $S$ that is not strongly metrizable, derived from the 2-coloring of the octahedron shown here.  The color of each colorful face matches that of its corresponding path in $S$.}
\end{figure}
\unboldmath

\begin{definition} [Abstract Polyhedron]
We say that an \emph{abstract polyhedron} $Q$ is a compact connected nonempty $2$-manifold without boundary with a cell decomposition of its surface.
We say that $Q$ is (non)-orientable if the manifold is (non)-orientable.

A two-coloring of an abstract polyhedron $Q$ is an assignment of one of two colors to each of its faces such that no two faces that share a nontrivial boundary arc have the same color.
We will use the convention that the two colors are ``gray'' and ``colorful;'' this lets us use colors to nuance the colorful faces, as we did in Figure \ref{fig:tangleoct}.
\end{definition}

\begin{definition} [Oriented Polyhedral Pairs] \label{def:polyhedral}
Let $T = (V, \Pi), T' = (V, \Pi')$ be distinct path systems over the same vertex set $V$.
We say that $(T, T')$ form an \emph{oriented polyhedral pair} if there exists an orientable $2$-colored abstract polyhedron $Q$ with vertices $V$, such that:
\begin{enumerate}
\item The nodes in each path $\pi \in \Pi$ are exactly the nodes on some corresponding colorful cell $q \in Q$.
Moreover, the order of the nodes of $\pi$ agrees with the orientation of $q$ (e.g., they occur around $q$ in clockwise order).

\item The nodes in each path $\pi' \in \Pi'$ are exactly the nodes of some corresponding gray cell $q' \in Q$.
Moreover, the order of the nodes of $\pi'$ agrees with the reversed orientation of $q'$ (e.g., they occur around $q$ in counterclockwise order). 

\item (Arc Agreement Property) Let us say that an arc $(u, v)$ on a cell $c$ is an ``endpoint arc'' if it corresponds to the endpoints of the path $\pi$ associated to $c$, rather than a pair of nodes $(u, v)$ that occur consecutively on $\pi$.
The last property is that the set of endpoint arcs induces a perfect matching between the gray and colorful faces; in particular, each cell has exactly one endpoint arc on its boundary.
\end{enumerate}
\end{definition}

To clarify these definitions, let us revisit Figure \ref{fig:tangleoct}.
The colorful path system $T$ on the left may be viewed as half of a polyhedral pair; the paired system $T'$ (not pictured) would be formed by taking the alternate three-node paths among the four endpoint pairs used by $T$ (which use the opposite node in the middle layer $\{c, d\}$).
These satisfy the embedding properties into the two-colored octahedron pictured on the right: for example, one of the paths in $T$ is $\pi_r = (ace)$, and these are exactly the three nodes in the red face of the octahedron read in clockwise order.
Additionally, we can verify that the arc agreement property is satisfied: for example, $(a, e)$ is an endpoint arc due to $\pi_r$; this matches the red face to the gray face $(ade)$, which does not have any other incident endpoint arcs and so is not matched to any other colorful face.

It is a slight abuse of terminology to call these ``polyhedral'' pairs, since traditional polyhedra are similar but different from cell decompositions, but we will revisit and justify this choice further in a moment.

\begin{theorem} \label{thm:intropolysmone}
For any polyhedral pair $(T, T')$, if a path system $S$ contains either $T$ or $T'$ as a subsystem, then $S$ is not strongly metrizable.
\end{theorem}

One application of Theorem \ref{thm:intropolysmone} is that it lets us easily generate new forbidden subsystems for strong metrizability, by considering various two-colored polyhedra.
Further examples are shown in Figures \ref{fig:hbp}, \ref{fig:esp}, and \ref{fig:tor}.

\begingroup
\boldmath

\def\DiagramSideWidth{5.9cm}
\def\DiagramGap{4.5mm}
\def\DiagramRowWidth{15cm}

\def\AlignedCentralBlackArrow{%
  \begin{tikzpicture}[baseline=(current bounding box.center)]
    \path[fill=black]
      (-1.15,-0.20) -- (-1.15,0.20) --
      ( 0.85, 0.20) -- ( 0.85,0.50) --
      ( 1.15, 0.00) --
      ( 0.85,-0.50) -- ( 0.85,-0.20) -- cycle;
  \end{tikzpicture}%
}

\providecolor{gold}{RGB}{190,145,0}
\definecolor{pathgold}{RGB}{190,145,0}
\tikzset{
  path arrow/.style={
    ultra thick,
    ->,
    >=stealth,
    shorten <=3.1pt,
    shorten >=3.1pt
  }
}

\begin{figure}[!p]
\centering
\setlength{\abovecaptionskip}{2pt}
\setlength{\belowcaptionskip}{0pt}
\begin{adjustbox}{max totalsize={\textwidth}{0.96\textheight},center}
\begin{minipage}{\textwidth}
\centering
\noindent\makebox[\textwidth][c]{%
\vbox{%
  \offinterlineskip
  \hbox{%
    \makebox[\DiagramSideWidth][r]{%
      \begin{tikzpicture}[baseline=(current bounding box.center)]
\draw [fill=black] (0, 0) circle [radius=0.15cm];
\draw [fill=black] (1,0) circle [radius=0.15cm];

\draw [fill=black] (-1, 2) circle [radius=0.15cm];
\draw [fill=black] (0.5, 2) circle [radius=0.15cm];
\draw [fill=black] (2, 2) circle [radius=0.15cm];

\draw [fill=black] (-1, -2) circle [radius=0.15cm];
\draw [fill=black] (0.5, -2) circle [radius=0.15cm];
\draw [fill=black] (2, -2) circle [radius=0.15cm];

\draw [ultra thick, red, ->] plot [smooth] coordinates {(-1, -2) (0, 0) (-1, 2)};
\draw [ultra thick, orange, ->] plot [smooth] coordinates {(0.5, -2) (1, 0) (-1, 2)};
\draw [ultra thick, gold, ->] plot [smooth] coordinates {(0.5, -2) (0, 0) (0.5, 2)};
\draw [ultra thick, green, ->] plot [smooth] coordinates {(2, -2) (1, 0) (0.5, 2)};
\draw [ultra thick, blue, ->] plot [smooth] coordinates {(2, -2) (0, 0) (2, 2)};
\draw [ultra thick, purple, ->] plot [smooth] coordinates {(-1, -2) (1, 0) (2, 2)};

\node at (-1, -2.5) {$a$};
\node at (0.5, -2.5) {$b$};
\node at (2, -2.5) {$c$};

\node [left=0.1cm] at (-0.1, 0) {$d$};
\node [right=0.1cm] at (1.1, 0) {$e$};

\node at (-1, 2.5) {$f$};
\node at (0.5, 2.5) {$g$};
\node at (2, 2.5) {$h$};
      \end{tikzpicture}%
    }%
    \hskip\DiagramGap
    \AlignedCentralBlackArrow
    \hskip\DiagramGap
    \makebox[\DiagramSideWidth][l]{%
      \begin{tikzpicture}[baseline=(current bounding box.center)]
\draw [fill=purple, opacity=0.7] (1.2, 0.4) -- (0, -2.5) -- (0.25, 1) -- cycle;
\draw [fill=gray, opacity=0.7] (1.2, .4) -- (0.25, 1) -- (0, 2.5) -- cycle;
\draw [fill=gray, opacity=0.7] (1.2, .4) -- (1, -0.6) -- (0, -2.5) -- cycle;
\draw [fill=red, opacity=0.7] (0, 2.5) -- (0.25, 1) -- (-1, 0.6) -- cycle;
\draw [fill=gray, opacity=0.7] (0.25, 1) -- (0, -2.5) -- (-1, 0.6) -- cycle;
\draw [fill=orange, opacity=0.7] (-1.2, -0.4) -- (-1, 0.6) -- (0, -2.5) -- cycle;

\draw [fill=blue, opacity=0.7] (0, 2.5) -- (1.2, 0.4) -- (1, -0.6) -- cycle;
\draw [fill=green, opacity=0.7] (1, -0.6) -- (0, -2.5) -- (-0.25, -1) -- cycle;
\draw [fill=gold, opacity=0.7] (0, 2.5) -- (-0.25, -1) -- (-1.2, -0.4) -- cycle;
\draw [pattern=vertical lines, thick] (0, 2.5) -- (-1, 0.6) -- (-1.2, -0.4) -- cycle;
\draw [pattern=north east lines, thick] (0, 2.5) -- (-0.25, -1) -- (1, -0.6) -- cycle;
\draw [pattern=north west lines, thick] (0, -2.5) -- (-1.2, -0.4) -- (-0.25, -1) -- cycle;

\draw [fill=black] (0, 2.5) circle [radius=0.15cm];
\draw [fill=black] (0, -2.5) circle [radius=0.15cm];
\draw [fill=black] (0.25, 1) circle [radius=0.15cm];
\draw [fill=black] (-0.25, -1) circle [radius=0.15cm];
\draw [fill=black] (-1, 0.6) circle [radius=0.15cm];
\draw [fill=black] (1.2, 0.4) circle [radius=0.15cm];
\draw [fill=black] (1, -0.6) circle [radius=0.15cm];
\draw [fill=black] (-1.2, -0.4) circle [radius=0.15cm];

\draw [thick, dashed] (0, 2.5) -- (0.25, 1);
\draw [thick] (0, 2.5) -- (-0.25, -1);
\draw [thick] (0, 2.5) -- (-1, 0.6);
\draw [thick] (0, 2.5) -- (1, -0.6);
\draw [thick] (0, 2.5) -- (1.2, 0.4);
\draw [thick] (0, 2.5) -- (-1.2, -0.4);

\draw [thick, dashed] (0, -2.5) -- (0.25, 1);
\draw [thick] (0, -2.5) -- (-0.25, -1);
\draw [thick, dashed] (0, -2.5) -- (-1, 0.6);
\draw [thick] (0, -2.5) -- (1, -0.6);
\draw [thick, dashed] (0, -2.5) -- (1.2, 0.4);
\draw [thick] (0, -2.5) -- (-1.2, -0.4);

\draw [thick] (1.2, 0.4) -- (1, -0.6) --  (-0.25, -1) -- (-1.2, -0.4) -- (-1, 0.6);
\draw [thick, dashed] (-1, 0.6) -- (0.25, 1) -- (1.2, 0.4);

\node at (0, 2.9) {$d$};
\node at (1.5, 0.7) {$h$};
\node at  (1.2, -0.9) {$c$};
\node at (-1.5, -0.7) {$b$};
\node at (0, -2.9) {$e$};
\node at (0.05, 0.6) {$a$};
\node at (-0.45, -0.6) {$g$};
\node at (-1.2, 0.9) {$f$};
      \end{tikzpicture}%
    }%
  }%
  \vskip6mm
  \hbox to \DiagramRowWidth{\hfil
    $\pi_r = (adf), \pi_o = (bef), \pi_y = (bdg),
     \pi_g = (ceg), \pi_b = (cdh), \pi_p = (aeh)$%
  \hfil}%
}%
}

\caption{A new forbidden subsystem, derived from the hexagonal bipyramid.}
\label{fig:hbp}

\vspace{2pt}
\noindent\makebox[\textwidth][c]{%
\vbox{%
  \offinterlineskip
  \hbox{%
    \makebox[\DiagramSideWidth][r]{%
      \begin{tikzpicture}[scale=0.7,baseline=(current bounding box.center)]
\draw [fill=black] (0, 0) circle [radius=0.15cm];
\draw [fill=black] (0, 1) circle [radius=0.15cm];

\draw [fill=black] (3, 0) circle [radius=0.15cm];
\draw [fill=black] (3, 1) circle [radius=0.15cm];

\draw [fill=black] (1.5, -1) circle [radius=0.15cm];

\draw [fill=black] (1.5, 2) circle [radius=0.15cm];
\draw [fill=black] (0, 3) circle [radius=0.15cm];
\draw [fill=black] (0, 4) circle [radius=0.15cm];
\draw [fill=black] (3, 3) circle [radius=0.15cm];
\draw [fill=black] (3, 4) circle [radius=0.15cm];

\draw [ultra thick, red, ->] (0, -0.5) -- (0, 4.5);
\draw [ultra thick, orange, ->] (3, -0.5) -- (3, 4.5);
\draw [ultra thick, gold, ->] plot [smooth] coordinates {(1.5, -1) (0, 0) (3, 1)};
\draw [ultra thick, green, ->] plot [smooth] coordinates {(1.5, -1) (3, 0) (0, 1)};
\draw [ultra thick, blue, ->] plot [smooth] coordinates {(1.5, 2) (0, 3) (3, 4)};
\draw [ultra thick, purple, ->] plot [smooth] coordinates {(1.5, 2) (3, 3) (0, 4)};

\node at (1.5, -1.4) {$a$};
\node at (-0.4, 0) {$b$};
\node at (3.4, 0) {$c$};
\node at (-0.4, 1) {$d$};
\node at (3.4, 1) {$e$};
\node at (1.5, 1.6) {$f$};
\node at (-0.4, 3) {$g$};
\node at (3.4, 3) {$h$};
\node at (-0.4, 4) {$i$};
\node at (3.4, 4) {$j$};
      \end{tikzpicture}%
    }%
    \hskip\DiagramGap
    \AlignedCentralBlackArrow
    \hskip\DiagramGap
    \makebox[\DiagramSideWidth][l]{%
      \begin{tikzpicture}[scale=0.56,baseline=(current bounding box.center)]
\draw [fill=gray, opacity=0.7] (-0.7, 2.5) -- (-0.7, -0.4) -- (3.1, -1.2) -- (3.1, 1.7) -- cycle;
\draw [fill=orange, opacity=0.7] (-0.7, 2.5) -- (-0.7, -0.4) -- (-3.1, -1.7) -- (-3.1, 1.2) -- cycle;
\draw [fill=gray, opacity=0.7] (-0.7, -0.4) -- (0, -4) -- (-3.1, -1.7) -- cycle;
\draw [fill=gray, opacity=0.7] (-3.1, 1.2) -- (-0.7, 2.5) -- (0, 4) -- cycle;
\draw [fill=gold, opacity=0.7] (0, -4) -- (-0.7, -0.4) -- (3.1, -1.2) -- cycle;
\draw [fill=purple, opacity=0.7] (0, 4) -- (-0.7, 2.5) -- (3.1, 1.7) -- cycle;

\draw [pattern=north east lines, thick] (-3.1, -1.7) -- (-3.1, 1.2) -- (0.7, 0.5) -- (0.7, -2.4) -- cycle;
\draw [fill=red, opacity=0.7] (0.7, 0.5) -- (0.7, -2.4) -- (3.1, -1.2) -- (3.1, 1.7) -- cycle;
\draw [pattern=north west lines, thick] (0.7, -2.4) -- (3.1, -1.2) -- (0, -4) -- cycle;
\draw [pattern=north west lines, thick] (0.7, 0.5) -- (0, 4) -- (3.1, 1.7) -- cycle;
\draw [fill=green, opacity=0.7] (0, -4) -- (0.7, -2.4) -- (-3.1, -1.7) -- cycle;
\draw [fill=blue, opacity=0.7] (0, 4) -- (0.7, 0.5) -- (-3.1, 1.2) -- cycle;

\node at (0, -4.4) {$a$};
\node at (3.5, -1.2) {$b$};
\node at (-3.5, -1.7) {$c$};
\node at (1, -1.9) {$d$};
\node at (-0.4, -0.8) {$e$};
\node at (0, 4.4) {$f$};
\node at (1, 0.2) {$g$};
\node at (-3.5, 1.2) {$j$};
\node at (3.5, 1.7) {$i$};
\node at (-0.4, 2.1) {$h$};

\draw [fill=black] (0, 4) circle [radius=0.15cm];
\draw [fill=black] (0, -4) circle [radius=0.15cm];

\draw [fill=black] (0.7, 0.5) circle [radius=0.15cm];
\draw [fill=black] (0.7, -2.4) circle [radius=0.15cm];

\draw [fill=black] (3.1, 1.7) circle [radius=0.15cm];
\draw [fill=black] (3.1, -1.2) circle [radius=0.15cm];

\draw [fill=black] (-3.1, 1.2) circle [radius=0.15cm];
\draw [fill=black] (-3.1, -1.7) circle [radius=0.15cm];

\draw [fill=black] (-0.7, 2.5) circle [radius=0.15cm];
\draw [fill=black] (-0.7, -0.4) circle [radius=0.15cm];

\draw [thick] (0, 4) -- (3.1, 1.7) -- (0.7, 0.5) -- (0, 4) -- (-3.1, 1.2) -- (0.7, 0.5) -- (0.7, -2.4) -- (-3.1, -1.7) -- (0, -4) -- (0.7, -2.4) -- (3.1, -1.2) -- (0, -4);
\draw [thick] (-3.1, 1.2) -- (-3.1, -1.7);
\draw [thick] (3.1, 1.7) -- (3.1, -1.2);

\draw [thick, dashed] (0, 4) -- (-0.7, 2.5) -- (-3.1, 1.2);
\draw [thick, dashed] (3.1, 1.7) -- (-0.7, 2.5) -- (-0.7, -0.4) -- (3.1, -1.2);
\draw [thick, dashed] (0, -4) -- (-0.7, -0.4) -- (-3.1, -1.7);
      \end{tikzpicture}%
    }%
  }%
  \vskip6mm
  \hbox to \DiagramRowWidth{\hfil
    $\pi_r = (bdgi), \pi_o = (cehj), \pi_y = (abe),
     \pi_g = (acd), \pi_b = (fgj), \pi_p = (fhi)$%
  \hfil}%
}%
}

\caption{A new forbidden subsystem, derived from the elongated square bipyramid.}
\label{fig:esp}

\vspace{2pt}
\noindent\makebox[\textwidth][c]{%
\vbox{%
  \offinterlineskip
  \hbox{%
    \makebox[\DiagramSideWidth][r]{%
      \begin{tikzpicture}[
        line cap=round,
        line join=round,
        scale=0.96,
        baseline=(current bounding box.center)
      ]
\coordinate (a) at (0,0);
\coordinate (b) at (2.4,3.5);
\coordinate (c) at (4.8,0);
\coordinate (d) at (2.4,1.35);


\draw[path arrow,red]
  (a)
  .. controls (1.35,-0.55) and (3.45,-0.55) .. (c)
  .. controls (4.15,0.55) and (3.25,1.10) .. (d);

\draw[path arrow,pathgold]
  (b)
  .. controls (2.02,2.90) and (2.02,2.00) .. (d)
  .. controls (1.65,1.12) and (0.72,0.56) .. (a);

\draw[path arrow,green!65!black]
  (c)
  .. controls (3.45,0.55) and (1.35,0.55) .. (a)
  .. controls (0.20,1.30) and (1.08,2.80) .. (b);

\draw[path arrow,blue]
  (d)
  .. controls (2.78,2.00) and (2.78,2.90) .. (b)
  .. controls (3.72,2.80) and (4.60,1.30) .. (c);

\foreach \p in {a,b,c,d}{
  \draw[fill=black] (\p) circle[radius=0.11cm];
}

\node[left=0.15cm]  at (a) {$a$};
\node[above=0.15cm] at (b) {$b$};
\node[right=0.15cm] at (c) {$c$};
\node[below=0.16cm] at (d) {$d$};
      \end{tikzpicture}%
    }%
    \hskip\DiagramGap
    \AlignedCentralBlackArrow
    \hskip\DiagramGap
    \makebox[\DiagramSideWidth][l]{%
      \begin{tikzpicture}[
        line cap=round,
        line join=round,
        scale=0.90,
        baseline=(current bounding box.center)
      ]
\coordinate (p00) at (0.0,0.0);
\coordinate (p10) at (1.8,0.0);
\coordinate (p20) at (3.6,0.0);

\coordinate (p01) at (0.9,1.55);
\coordinate (p11) at (2.7,1.55);
\coordinate (p21) at (4.5,1.55);

\coordinate (p02) at (1.8,3.10);
\coordinate (p12) at (3.6,3.10);
\coordinate (p22) at (5.4,3.10);


\draw[fill=green!65!black,opacity=0.7]
  (p00) -- (p10) -- (p01) -- cycle; 

\draw[fill=gray,opacity=0.7]
  (p10) -- (p11) -- (p01) -- cycle; 

\draw[fill=red,opacity=0.7]
  (p10) -- (p20) -- (p11) -- cycle; 

\draw[fill=gray,opacity=0.7]
  (p20) -- (p21) -- (p11) -- cycle; 

\draw[fill=blue,opacity=0.7]
  (p01) -- (p11) -- (p02) -- cycle; 

\draw[fill=gray,opacity=0.7]
  (p11) -- (p12) -- (p02) -- cycle; 

\draw[fill=pathgold,opacity=0.7]
  (p11) -- (p21) -- (p12) -- cycle; 

\draw[fill=gray,opacity=0.7]
  (p21) -- (p22) -- (p12) -- cycle; 

\draw[ultra thick]
  (p00) -- (p20) -- (p22) -- (p02) -- cycle;

\draw[ultra thick] (p00) -- (p10) -- (p20);
\draw[ultra thick] (p01) -- (p11) -- (p21);
\draw[ultra thick] (p02) -- (p12) -- (p22);

\draw[ultra thick] (p00) -- (p01);
\draw[ultra thick] (p10) -- (p01);
\draw[ultra thick] (p10) -- (p11);
\draw[ultra thick] (p20) -- (p11);
\draw[ultra thick] (p20) -- (p21);

\draw[ultra thick] (p01) -- (p02);
\draw[ultra thick] (p11) -- (p02);
\draw[ultra thick] (p11) -- (p12);
\draw[ultra thick] (p21) -- (p12);
\draw[ultra thick] (p21) -- (p22);

\foreach \p in {%
  p00,p10,p20,%
  p01,p11,p21,%
  p02,p12,p22%
}{%
  \draw[fill=black] (\p) circle[radius=0.12cm];
}

\node[below left=0.08cm]  at (p00) {$a$};
\node[below=0.12cm]       at (p10) {$c$};
\node[below right=0.08cm] at (p20) {$a$};

\node[left=0.14cm]        at (p01) {$b$};
\node[below=0.15cm]       at (p11) {$d$};
\node[right=0.14cm]       at (p21) {$b$};

\node[above left=0.08cm]  at (p02) {$c$};
\node[above=0.12cm]       at (p12) {$a$};
\node[above right=0.08cm] at (p22) {$c$};
      \end{tikzpicture}%
    }%
  }%
  \vskip6mm
  \hbox to \DiagramRowWidth{\hfil
    $\pi_r = (acd), \pi_y = (bda), \pi_g = (cab), \pi_b = (dbc)$%
  \hfil}%
}%
}

\caption{\label{fig:tor} A new forbidden subsystem, derived from a two-colored triangulation of the torus. (For visual clarity the nodes $a,b,c$ have been represented in the picture several times; when glued together appropriately, the surface is topologically a torus.)}

\end{minipage}
\end{adjustbox}
\end{figure}
\unboldmath


Now that we have identified the polyhedral pairs as a new class of forbidden subsystems, let us acknowledge that certain \emph{vertex gluing} operations will preserve their status as an obstruction to strong metrizability, even if they technically destroy the polyhedral mapping.
Informally: imagine a polyhedral pair $(T, T')$, derived from (say) a checkerboard pattern imposed on a very large sphere.
Then, choose two nodes $v, v'$ on opposite sides of the sphere, and glue them together.
The resulting path systems $(T, T')$ after this gluing are technically no longer a polyhedral pair.
But, this gluing operation certainly will not restore strong metrizability.
Indeed, from a graphical perspective, the gluing is similar to \emph{adding a constraint} that the edges incident to $v$ and $v'$ use the same edge weights as each other, which can only make it harder to find a witness graph.

Given this, we might have some intuition that in addition to polyhedral pairs being forbidden as subsystems, their \emph{images} (which may map two nodes together) are forbidden as well.
This turns out to \emph{almost} be the case: if one maps too many nodes together, it is possible to trivialize a polyhedral pair $(T, T')$ by mapping $T$ and $T'$ to the same system, which can restore strong metrizability (an example of this is shown in Figure \ref{fig:overglue}).
But so long as we preserve distinctness of $T$ and $T'$ through the mapping, this intuition holds.
We then come to our first main result: with this natural extension to the polyhedral pairs, this \emph{completes} the forbidden-subsystem characterization of strong metrizability in the directed setting.
We state this formally as follows.

\begin{definition} [Path System Mapping]
For a path system $S = (V, \Pi)$ and a map $\phi : V \to V'$, we write $\phi(S) := (V', \phi(\Pi))$, where $\phi(\Pi)$ denotes applying $\phi$ to each path $\pi \in \Pi$ entrywise.
\end{definition}

\begin{theorem} [Main Result, Directed Setting] \label{thm:maindir}
A path system $S$ is strongly metrizable \textbf{if and only if} it is simple, consistent, and for any (oriented) polyhedral pair $(T, T')$ and any vertex mapping $\phi$ with $\phi(T) \ne \phi(T')$, $S$ avoids $\phi(T)$ (and $\phi(T')$, by symmetry) as a subsystem.
\end{theorem}

We remark here that this characterization remains correct even if we restrict attention to polyhedral pairs with certain additional properties that more naturally correspond to polyhedra, such as: every face of the corresponding abstract polyhedron is simple (non-self intersecting) and contains at least three nodes, there are no parallel edges, and $T$ is minimal (in the sense that no proper subsystem of $T$ also participates in a polyhedral pair).
These points are purely cosmetic and a bit nuanced, so we defer their discussion to Section \ref{sec:cleanup}.
The condition $\phi(T) \ne \phi(T')$ is necessary for the correctness of this theorem; a counterexample when $\phi(T) = \phi(T')$ is shown in Figure \ref{fig:overglue}.

Phrased another way, $Q$ is a two-colored abstract polyhedron in the usual sense (with cells corresponding to faces), except that we do not require that $Q$ embeds properly in $\rr^3$ with flat/non-intersecting faces.
We also do not require the surface topology of $Q$ to be a sphere; for example, $Q$ may be a toroidal polyhedron, or have the topology of a Klein bottle, etc.
Despite these differences, we will prefer the view of $Q$ as a polyhedron rather than a manifold, and so we will use terminology like \emph{faces}, \emph{edges}, and \emph{vertices} of $Q$ rather than cells, arcs, and cell intersections.
In Section \ref{sec:cleanup}, we discuss some additional ways that we can impose properties on $Q$ that make it resemble a polyhedron in a traditional sense, e.g., having at least three nodes per face.
These are purely aesthetic additions and are not needed for our main characterization theorems.

\begin{figure}
\begin{center}
\begin{tikzpicture}[scale=1.25]

\draw [ultra thick, red, ->] plot [smooth] coordinates {(0, -0.5) (0, 2.5)};
\draw [ultra thick, yellow!70!orange, ->] plot [smooth] coordinates {(-0.5, -0.5) (0, 0) (1, 1) (1, 2) (1, 2.5)};
\draw [ultra thick, green, ->] plot [smooth] coordinates {(1, -0.5) (1, 0) (1, 1) (0, 2) (-0.5, 2.5)};
\draw [ultra thick, blue, ->] plot [smooth] coordinates {(1.5, -0.5) (1, 0) (0, 1) (1, 2) (1.5, 2.5)};

\draw [fill=black] (0, 0) circle [radius=0.1cm];
\draw [fill=black] (1, 0) circle [radius=0.1cm];
\draw [fill=black] (0, 1) circle [radius=0.1cm];
\draw [fill=black] (1, 1) circle [radius=0.1cm];
\draw [fill=black] (0, 2) circle [radius=0.1cm];
\draw [fill=black] (1, 2) circle [radius=0.1cm];

\node [left=0.15cm] at (0, 0) {$a$};
\node [left=0.15cm] at (0, 1) {$c$};
\node [left=0.15cm] at (0, 2) {$e$};
\node [right=0.15cm] at (1, 0) {$b$};
\node [right=0.15cm] at (1, 1) {$d$};
\node [right=0.15cm] at (1, 2) {$f$};

\draw [fill=black] (2.5, 0.8) -- (2.5, 1.2) -- (4.5, 1.2) -- (4.5, 1.5) -- (4.8, 1) -- (4.5, 0.5) -- (4.5, 0.8) -- cycle;

\begin{scope}[xshift=5.8cm]
    \coordinate (a) at (0, 0);
    \coordinate (b) at (1, 0);
    \coordinate (cd) at (0.5, 1);
    \coordinate (e) at (0, 2);
    \coordinate (f) at (1, 2);

    \draw [ultra thick, red, ->] plot [smooth] coordinates {
        (0, -0.5) (a) (0.2, 0.45) (cd) (0.2, 1.55) (e) (0, 2.5)
    };

    \draw [ultra thick, yellow!70!orange, ->] plot [smooth] coordinates {
        (-0.5, -0.5) (a) (0.42, 0.35) (cd) (0.58, 1.65) (f) (1, 2.5)
    };

    \draw [ultra thick, green, ->] plot [smooth] coordinates {
        (1, -0.5) (b) (0.58, 0.35) (cd) (0.42, 1.65) (e) (-0.5, 2.5)
    };

    \draw [ultra thick, blue, ->] plot [smooth] coordinates {
        (1.5, -0.5) (b) (0.8, 0.45) (cd) (0.8, 1.55) (f) (1.5, 2.5)
    };

    \draw [fill=black] (a) circle [radius=0.1cm];
    \draw [fill=black] (b) circle [radius=0.1cm];
    \draw [fill=black] (cd) circle [radius=0.1cm];
    \draw [fill=black] (e) circle [radius=0.1cm];
    \draw [fill=black] (f) circle [radius=0.1cm];

    \node [left=0.15cm] at (a) {$a$};
    \node [right=0.15cm] at (b) {$b$};
    \node [right=0.18cm] at (cd) {$c=d$};
    \node [left=0.15cm] at (e) {$e$};
    \node [right=0.15cm] at (f) {$f$};
\end{scope}
\end{tikzpicture}
\end{center}
\caption{\label{fig:overglue} An example demonstrating the necessity of the $\phi(T) \ne \phi(T')$ condition in Theorem \ref{thm:maindir}.  The path system $S$ on the left is not strongly metrizable, as previously shown, but a mapping that identifies the middle vertices $c$ and $d$ gives a system that is strongly metrizable (since there are no competing alternate paths between any pair of endpoints), thus destroying its status as a forbidden subsystem.}
\end{figure}

\subsection{The Undirected Setting}

So far, we have discussed the setting where the input path system is directed, and the graph witnessing strong metrizability can be directed.
We will next explain how our characterization adapts to the undirected setting.

\begin{definition} [Undirected Path Systems]
An \emph{undirected path system} $S = (V, \Pi)$ is a directed path system in which every path $\pi$ is considered equivalent to the path with reversed node order.
\end{definition}

We will then seek a forbidden-subsystem characterization of strong metrizability of undirected path systems (which requires the witness graph $G$ to be undirected).
The definition of subsystems extends immediately to the undirected setting.
Note that \emph{the inconsistency} in the undirected setting refers to the same path system described previously, but this is now considered equivalent to the systems obtained by reversing path orders.
For example, the system with paths $\{(ac), (cba)\}$ is strongly metrizable in the directed setting, but equivalent to the inconsistency (and therefore not strongly metrizable) in the undirected setting.

The previous definition of \emph{oriented} polyhedral pairs $(T, T')$ does not extend to the undirected setting: it requires that the order of nodes in each path in $T$ agrees with the orientation of the corresponding face of the polyhedron, and that the paths in $T'$ disagree with the corresponding orientation, and this constraint is no longer meaningful when paths are equivalent to their reverse.
We will thus consider the following relaxation:

\begin{definition} [Non-Oriented Polyhedral Pairs]
Let $T = (V, \Pi), T' = (V, \Pi')$ be undirected path systems over the same vertex set $V$.
We say that $(T, T')$ form a non-oriented polyhedral pair if Definition \ref{def:polyhedral} holds with the following modifications:
\begin{itemize}
\item The abstract polyhedron $Q$ may be orientable or non-orientable,

\item The nodes in each path in $\Pi$ and $\Pi'$ still must occur in order around the corresponding colorful and gray face (respectively) of $Q$, but they may occur in either order around the face, and do not need to respect any orientation (if one exists),

\item The arc agreement property holds with respect to each \emph{unordered} edge $\{u, v\} \in Q$: either the adjacent gray and colorful faces both correspond to paths that contain $\{u, v\}$ consecutively, or they both correspond to paths with endpoints $\{u, v\}$.
\end{itemize}
\end{definition}

We may then characterize undirected strong metrizability as follows:

\begin{theorem} [Main Result, Undirected Setting] \label{thm:mainintround}
An undirected path system $S$ is undirected strongly metrizable \textbf{if and only if} it is simple, consistent, and for any non-oriented polyhedral pair $(T, T')$ and any vertex mapping $\phi$ with $\phi(T) \ne \phi(T')$, $S$ contains neither $\phi(T)$ nor $\phi(T')$ as a subsystem.
\end{theorem}

The upshot of this theorem is that the new forbidden subsystems, which apply in the undirected setting but not the directed setting, are precisely those derived from two-colorings of \emph{non-orientable} surfaces.
One such example is given by Figure \ref{fig:nonorientable}.

\begin{figure}[t]
\begin{center}
\begin{tikzpicture}
\tikzset{
	pathlabel/.style={font=\footnotesize\bfseries, inner sep=1pt, fill=white, fill opacity=0.85, text opacity=1},
	gridlabel/.style={font=\footnotesize\bfseries, text=white, inner sep=0pt}
}

\begin{scope}[shift={(-8, 0)}]

\draw [ultra thick, green] (0.5, 0) -- (0.5, 0.5) -- (0.5, 4) -- (0, 4.5);
\draw [ultra thick, yellow!70!orange] (0, 0) -- (0, 0.5) -- (0, 4) -- (0.5, 4.5);
\draw [ultra thick, red] (0, 0) -- (0.5, 0.5) -- (3, 2) -- (2.5, 2.5) -- (2.5, 2);
\draw [ultra thick, orange] (0.5, 0) -- (0, 0.5) -- (2, 2.5) -- (2, 2);
\draw [ultra thick, blue] (2, 2) -- (2.5, 2.5) -- (0.5, 4) -- (0.5, 4.5);
\draw [ultra thick, purple] (2.5, 2) -- (2, 2.5) -- (0, 4) -- (0, 4.5);

\foreach \x/\y in {
	0/0, 0/0.5, 0.5/0, 0.5/0.5,
	0/4, 0/4.5, 0.5/4, 0.5/4.5,
	2/2, 2/2.5, 2.5/2, 2.5/2.5
}{
	\draw [fill=black] (\x, \y) circle [radius=0.15cm];
}

\node[pathlabel, anchor=north east] at (-0.12, -0.12) {a};
\node[pathlabel, anchor=south east] at (-0.12, 0.62) {b};
\node[pathlabel, anchor=north west] at (0.62, -0.12) {c};
\node[pathlabel, anchor=south west] at (0.62, 0.62) {d};

\node[pathlabel, anchor=north east] at (-0.12, 3.88) {e};
\node[pathlabel, anchor=south east] at (-0.12, 4.62) {f};
\node[pathlabel, anchor=north west] at (0.62, 3.88) {g};
\node[pathlabel, anchor=south west] at (0.62, 4.62) {h};

\node[pathlabel, anchor=north east] at (1.88, 1.88) {i};
\node[pathlabel, anchor=south east] at (1.88, 2.62) {j};
\node[pathlabel, anchor=north west] at (2.62, 1.88) {k};
\node[pathlabel, anchor=south west] at (2.62, 2.62) {l};

\draw [ultra thick, ->] (4, 2.25) -- (6.1, 2.25);

\end{scope}

\draw [fill=gray] (0, 1) -- (0, 4) -- (4, 4) -- (4, 1) -- cycle;

\draw [ultra thick] (0, 1) -- (0, 4) -- (4, 4) -- (4, 1) -- cycle;
\draw [ultra thick] (1, 1) -- (1, 4);
\draw [ultra thick] (2, 1) -- (2, 4);
\draw [ultra thick] (3, 1) -- (3, 4);
\draw [ultra thick] (0, 1) -- (4, 1);
\draw [ultra thick] (0, 2) -- (4, 2);
\draw [ultra thick] (0, 3) -- (4, 3);

\draw [fill=red] (1, 3) -- (1, 4) -- (2, 4) -- (2, 3) -- cycle;
\draw [fill=yellow!70!orange] (0, 2) -- (0, 3) -- (1, 3) -- (1, 2) -- cycle;
\draw [fill=blue] (1, 1) -- (2, 1) -- (2, 2) -- (1, 2) -- cycle;
\draw [fill=green, <-] (3, 3) -- (2, 3) -- (2, 2) -- (3, 2) -- cycle;
\draw [fill=orange] (3, 3) -- (4, 3) -- (4, 4) -- (3, 4) -- cycle;
\draw [fill=purple] (3, 1) -- (4, 1) -- (4, 2) -- (3, 2) -- cycle;

\foreach \x in {0,...,4}{
	\foreach \y in {1,...,4}{
		\draw [fill=black] (\x, \y) circle [radius=0.18cm];
	}
}

\draw [ultra thick] (0.8, 3.5) -- (1.2, 3.5);
\draw [ultra thick] (2.8, 3.5) -- (3.2, 3.5);
\draw [ultra thick] (0.8, 2.5) -- (1.2, 2.5);
\draw [ultra thick] (2.8, 2.5) -- (3.2, 2.5);
\draw [ultra thick] (0.8, 1.5) -- (1.2, 1.5);
\draw [ultra thick] (2.8, 1.5) -- (3.2, 1.5);

\draw [ultra thick, ->] (0, 0.55) -- (4, 0.55);
\draw [ultra thick, <-] (0, 4.45) -- (4, 4.45);
\draw [ultra thick, <-] (-0.35, 1) -- (-0.35, 4);
\draw [ultra thick, <-] (-0.55, 1) -- (-0.55, 4);
\draw [ultra thick, <-] (4.35, 1) -- (4.35, 4);
\draw [ultra thick, <-] (4.55, 1) -- (4.55, 4);

\foreach \x/\y/\name in {
	0/1/j, 0/2/e, 0/3/b, 0/4/j,
	1/1/i, 1/2/h, 1/3/a, 1/4/k,
	2/1/l, 2/2/g, 2/3/d, 2/4/l,
	3/1/k, 3/2/f, 3/3/c, 3/4/i,
	4/1/j, 4/2/e, 4/3/b, 4/4/j
}{
	\node[gridlabel] at (\x, \y) {\name};
}

\node at (6, 2.5) {\includegraphics[scale=0.2]{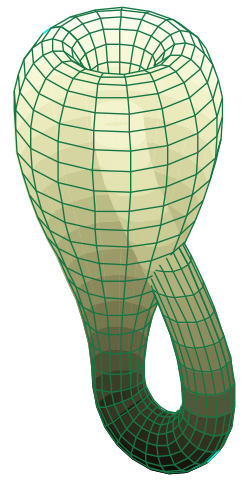}};

\end{tikzpicture}
\end{center}
\caption{\label{fig:nonorientable} The undirected path system on the left is part of a non-orientable polyhedral pair.  This can be seen by mapping its vertices into the two-colored grid shown in the middle, with opposite sides of the grid glued in the orientation indicated by the external arrows (endpoint edges are marked with hashes).  This has the topology of a Klein bottle (Klein bottle figure from \cite{tttrung2006kleinbottle}).}
\end{figure}

\subsection{Structural Corollaries of the Topological Perspective}

The connections between topology and shortest paths developed in this paper have several implications for technology transfer between the areas.
The following two corollaries state facts that were already known in the literature on shortest paths, but our characterization implies new topologically-minded proofs that may be of interest.

\begin{corollary} [Folklore]
If a path system is strongly metrizable, then this fact can always be witnessed by a graph $G = (V, E, w)$ whose edge weights are all positive.
\end{corollary}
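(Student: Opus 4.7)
The approach is to post-process any witnessing graph $G = (V, E, w)$ for the strong metrizability of $S$ into a graph $G''$ with strictly positive weights that still witnesses $S$. The post-processing has two stages: first, a Johnson-style reweighting via potentials to achieve nonnegative weights; second, a small uniform positive shift to achieve strict positivity. The ``topological flavor'' of the proof comes from invoking Theorem \ref{thm:mainintrotech} as a conceptual backdrop, since that characterization is purely combinatorial and insensitive to weight signs; the mechanical construction of the positive-weight witness is then a routine matter.

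By the footnote accompanying Definition \ref{def:introsm}, I assume without loss of generality that every directed cycle in $G$ has strictly positive total weight. For stage one, I adjoin a virtual source $s$ with a zero-weight edge to every node of $V$, and let $h(v)$ denote the shortest-path distance from $s$ to $v$ in the augmented graph; these distances are well-defined because all cycles have strictly positive weight. Define $w'(u,v) := w(u,v) + h(u) - h(v)$. The triangle inequality $h(v) \leq h(u) + w(u,v)$ yields $w'(u,v) \geq 0$, and for any $u \leadsto v$ path $\pi$ the potentials telescope to $w'(\pi) = w(\pi) + h(u) - h(v)$. Hence the relative ordering of $u \leadsto v$ path lengths is preserved, so $(V, E, w')$ still witnesses the strong metrizability of $S$ and has nonnegative weights.

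For stage two, I set $w''(u,v) := w'(u,v) + \eps$ for a parameter $\eps > 0$ to be chosen. Because cycles have the same total weight under $w$ and $w'$, every cycle in $(V, E, w')$ still has strictly positive weight, so shortest walks in $(V, E, w'')$ are simple paths; each prescribed $\pi \in \Pi$ therefore has only finitely many simple-path competitors $\pi'$ between its endpoints. The condition $w''(\pi) < w''(\pi')$ rearranges to $\eps \cdot (|\pi| - |\pi'|) < w'(\pi') - w'(\pi)$, whose right-hand side is a strictly positive slack by the s.m.\ hypothesis. When $|\pi| \leq |\pi'|$ this holds for every $\eps > 0$; otherwise $\eps$ must lie strictly below a positive ratio. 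Taking $\eps$ below the minimum of this finite collection of positive ratios (say, half of it) yields strictly positive weights $w''$ that still uniquely realize every path in $\Pi$.

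The one genuine (mild) obstacle is in stage two: one must guarantee that the competitor set is finite, which is precisely where the positive-cycle reduction is essential — without it, arbitrarily long walks around non-positive cycles could create an unbounded family of constraints and preclude any valid choice of $\eps$. Given that reduction, no serious technical difficulty remains, and the proof is complete.
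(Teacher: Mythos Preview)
Your proof is correct, but it is essentially the folklore Johnson-reweighting argument (which the paper itself records as the ``Folklore Proof Sketch'' of Theorem \ref{thm:possmeq}), not the proof the paper attaches to this particular corollary. Your invocation of Theorem \ref{thm:mainintrotech} as a ``conceptual backdrop'' is purely decorative; nothing in your argument uses it.

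The paper's proof here is genuinely different. It first observes (via a minor tweak to Theorem \ref{thm:mainintrotech}) that \emph{positive} strong metrizability is characterized by consistency plus the nonexistence of a homomorphism from a ``surface path system'' --- a polyhedral system relaxed so that the underlying manifold may have nonempty boundary, provided that boundary consists entirely of non-endpoint colorful arcs. It then invokes the topological fact that the boundary of a boundary is zero: this forces the boundary arcs of any such surface to decompose into an edge-disjoint union of directed cycles, each of which can be capped off by a new gray face. After capping, the surface becomes boundaryless, hence a polyhedron in the original sense. Thus surface path systems and polyhedral path systems coincide, the two characterizations collapse into one, and positive s.m.\ equals s.m.

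What each approach buys: yours is elementary, self-contained, and needs none of the paper's machinery --- it is the right proof if one just wants the fact. The paper's proof is the point of the corollary's placement: it illustrates that the topological characterization is robust enough to recover known structural facts by purely combinatorial/topological reasoning about obstruction systems, with no reference to weights or potentials at all.
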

\begin{proof} [New Topological Proof Sketch]
The underlying topological fact for this statement is that the boundary of a boundary is always identically $0$.
In particular, by a minor tweak to our main result, the setting with positive edge weights is characterized by two-colored manifolds whose boundary (if nonempty) consists entirely of non-endpoint edges on the boundary of colorful faces.
Since the boundary of this boundary is always $0$, it may always be partitioned into an edge-disjoint union of directed cycles.
We may thus iteratively patch these directed cycles of colorful arcs by adding new gray faces to the surface in a way that respects their orientation.
After these patches are applied, our surface has no remaining boundary and thus corresponds to a two-colored polyhedron exactly as before.
Hence strong metrizability and positive strong metrizability are both equivalently characterized by simplicity, consistency, and forbidden images of polyhedral pairs, and so are equivalent.
\end{proof}

\begin{corollary} [Folklore]
Let $S = (V, \Pi)$ be an undirected path system, and let $\overleftrightarrow{S} = (V, \overleftrightarrow{\Pi})$ be the directed path system that contains each path $\pi \in \Pi$ in both of its possible directions.
Then $S$ is strongly metrizable (by an undirected graph) if and only if $\overleftrightarrow{S}$ is strongly metrizable (by a directed graph).
\end{corollary}
\begin{proof} [New Topological Proof Sketch]
This statement is implied by the topological notion of an \emph{orientable double-cover}.
In particular: for every non-orientable manifold $M_N$, there is an orientable manifold $M_O$ and a $2$-to-$1$ continuous map $f : M_O \to M_N$.
Filtering this fact through Theorems \ref{thm:maindir} and \ref{thm:mainintround}, we learn that for any non-oriented polyhedral pair $(T, T')$, there is always an oriented polyhedral pair $(R, R')$ that maps to $(T, T')$.
Thus strong metrizability of $S$ is characterized by the same set of conditions as strong metrizability of $\overleftrightarrow{S}$, and hence these notions are equivalent.
\end{proof}

One could ask whether the previous corollary can be strengthened: if an undirected system $S$ is \emph{not} strongly metrizable, can we always witness this by choosing some orientation of its paths (choosing one direction per path, as opposed to $\overleftrightarrow{S}$ that takes both directions)?
This turns out to be false: Theorem \ref{thm:mainintround} implies that there are some path systems that display fundamentally undirected obstructions to strong metrizability.

\begin{corollary}
There are undirected path systems $S = (V, \Pi)$ that are not strongly metrizable (by an undirected graph), but where for any directed path system $S' = (V, \Pi')$ obtained by choosing one direction for each path, the system $S'$ is strongly metrizable (by a directed graph).
\end{corollary}
\begin{proof} [Proof Sketch]
Theorems \ref{thm:maindir} and \ref{thm:mainintround} imply that this holds for any non-orientable polyhedral path system $S$ (that does not also contain an orientable polyhedral path system).
An example of such a system is given in Figure \ref{fig:nonorientable}.
\end{proof}

Finally, let us say that a path system is \emph{acyclic} if we can choose a total ordering of its nodes (a \emph{topological ordering}) such that the nodes in each $\pi \in \Pi$ appear in this order.
For the special case of acyclic systems, our characterization implies the following:

\begin{corollary} \label{cor:introdagrot}
Let $S = (V, \Pi)$ be a directed acyclic strongly metrizable path system, and let $S' = (V, \Pi')$ be a directed path system where each $\pi' \in \Pi'$ is obtained by taking circular shifts of the node ordering of some corresponding path $\pi \in \Pi$.
Then $S'$ is strongly metrizable if and only if it is consistent.
\end{corollary}

An example of this corollary is shown in Figure \ref{fig:altoct}.
This figure applies a circular shift to the forbidden subsystem $S$ from Figure \ref{fig:tangle} (which was acyclic), and observes that it remains non-strongly metrizable after a circular shift to its node ordering, since it is generated from the same two-colored octahedron.

\boldmath
\begin{figure}[h]
\begin{center}
\begin{tikzpicture}[scale=0.9]
\draw [fill=black] (0.5, 0) circle [radius=0.15cm];
\draw [fill=black] (0, 1) circle [radius=0.15cm];
\draw [fill=black] (1, 1) circle [radius=0.15cm];
\draw [fill=black] (0, 2) circle [radius=0.15cm];
\draw [fill=black] (1, 2) circle [radius=0.15cm];
\draw [fill=black] (0.5, 3) circle [radius=0.15cm];

\node at (0.5, -0.5) {$b$};
\node [left=0.15cm] at (0, 1) {$c$};
\node [left=0.15cm] at (0, 2) {$e$};
\node at (0.5, 3.5) {$a$};
\node [right=0.15cm] at (1, 1) {$d$};
\node [right=0.15cm] at (1, 2) {$f$};

\draw [ultra thick, red, ->] plot [smooth] coordinates {(0, 0.6) (0, 1) (0, 2) (0.5, 3) (0.9, 3.3)};
\draw [ultra thick, gold, ->] plot [smooth] coordinates {(1, 0.6) (1, 1) (1, 2) (0.5, 3) (0.1, 3.3)};
\draw [ultra thick, green, ->] plot [smooth] coordinates {(0.1, -0.3) (0.5, 0) (1, 1) (0, 2) (-0.4, 2.3)};
\draw [ultra thick, blue, ->] plot [smooth] coordinates {(0.9, -0.3) (0.5, 0) (0, 1) (1, 2) (1.4, 2.3)};

\begin{scope}[shift={(0, 0.5)}]
\draw [fill=black] (2.5, 0.8) -- (2.5, 1.2) -- (4.5, 1.2) -- (4.5, 1.5) -- (4.8, 1) -- (4.5, 0.5) -- (4.5, 0.8) -- cycle;
\end{scope}
\end{tikzpicture}%
\hspace{1cm}%
\begin{tikzpicture}[scale=0.7]
\begin{scope}[shift={(0, 0.5)}]
\draw [fill=gold, opacity=0.7] (-2, 0) -- (0.4, 0.4) -- (0, 2) -- cycle;
\draw [fill=blue, opacity=0.7] (0.4, 0.4) -- (2, 0) -- (0, -2) -- cycle;
\draw [fill=gray, opacity=0.7] (0.4, 0.4) -- (2, 0) -- (0, 2) -- cycle;
\draw [fill=gray, opacity=0.7] (0.4, 0.4) -- (-2, 0) -- (0, -2) -- cycle;

\draw [fill=red, opacity=0.7] (-0.4, -0.4) -- (0, 2) -- (2, 0) -- cycle;
\draw [fill=green, opacity=0.7] (-0.4, -0.4) -- (-2, 0) -- (0, -2) -- cycle;
\draw [pattern=north west lines, pattern color=black, thick] (-2, 0) -- (-0.4, -0.4) -- (0, 2) -- cycle;
\draw [pattern=north east lines, pattern color=black, thick] (-0.4, -0.4) -- (2, 0) -- (0, -2) -- cycle;

\draw [fill=black] (-2, 0) circle [radius=0.15cm];
\draw [fill=black] (2, 0) circle [radius=0.15cm];
\draw [fill=black] (0, 2) circle [radius=0.15cm];
\draw [fill=black] (0, -2) circle [radius=0.15cm];
\draw [fill=black] (-0.4, -0.4) circle [radius=0.15cm];
\draw [fill=black] (0.4, 0.4) circle [radius=0.15cm];

\draw [ultra thick] (-2, 0) -- (0, 2) -- (2, 0) -- (0, -2) -- (-2, 0);
\draw [ultra thick] (-2, 0) -- (-0.4, -0.4) -- (2, 0);
\draw [ultra thick] (0, 2) -- (-0.4, -0.4) -- (0, -2);

\draw [ultra thick, dashed] (-2, 0) -- (0.4, 0.4) -- (2, 0);
\draw [ultra thick, dashed] (0, 2) -- (0.4, 0.4) -- (0, -2);

\node at (0, 2.4) {$a$};
\node at (-0.65, -0.65) {$e$};
\node at (0, -2.4) {$b$};
\node at (-2.4, 0) {$d$};
\node at (2.4, 0) {$c$};
\node at (0.65, 0.65) {$f$};
\end{scope}
\end{tikzpicture}
\end{center}
\caption{\label{fig:altoct} When we take circular shifts of the node orderings of $S$ (Figure \ref{fig:tangle}), we get a new forbidden subsystem, derived from the same 2-colored octahedron in a different way.
}
\end{figure}
\unboldmath

Finally, we mention a quick algorithmic corollary, for the problem of testing whether or not an input path system is strongly metrizable.
It is easy to do so in polynomial time: one can write a straightforward LP, expressing a choice of edge weights that makes every path strictly shorter than its alternatives (this LP generally has exponentially many constraints, but it admits a separation oracle).
We obtain a more concise and efficient LP, which does not require separation oracles.
See Corollary \ref{cor:algorithm} for details.

\subsection{Other Related Work}

\paragraph{Distance Preservers.}

Shortest path structure has been previously studied through the lens of \emph{distance preservers}, in which the goal is to determine the maximum possible number of edges that might be needed in a subgraph that exactly preserves distances among a given set $P$ of $|P|=p$ demand pairs in an $n$-node input graph.
These were introduced by Coppersmith and Elkin \cite{CE06}; some of the work on this problem includes \cite{BCE05, CE06, BV21, Bodwin21, GR17, CDKL17, CGMW18}.

All of the known upper bounds for distance preservers work only by exploiting simplicity and consistency alone, and thus they generalize to bound the number of edges in \emph{any} set of simple consistent paths in a graph (even if those paths cannot be induced as unique shortest paths under any edge weights).
There have been technical lower bounds \cite{BHT23, BV21} demonstrating that the known bounds have reached the limits of simplicity and consistency, and so any further improved upper bounds for distance preservers must exploit new structure.
This paper provides new structure that could, in principle, circumvent these technical lower bounds.
It is therefore an interesting open problem to use it for this purpose.

\paragraph{Homology Theories for Graphs.}

Our work has technical similarity to the path homologies introduced in \cite{GLMY13} (see also \cite{GLMY14, GLMY15, CM18} for followup work).
This prior work has directly influenced many of the design choices made in this paper, especially those related to the \emph{boundary operators} over paths that are used as a central technical tool in our arguments.
However, our definitions and terminology do not always exactly align with those of \cite{GLMY13}, since our goal is to study shortest paths whereas their goal is to devise a coherent graph homology theory.
It would be interesting to unify these frameworks, perhaps relating the shortest path properties of a graph to its appropriate homology groups.
Some progress in this direction may be found in \cite{Wu17}; here, the author relates cycle decompositions of directed graphs (which bear some technical similarity to our characterization theorems) to a certain homology group of those graphs (in a sense related, but not identical, to the homology theory of \cite{GLMY13}).

\paragraph{Approximate and Weight-Restricted Shortest Paths.}

Since we have found that there are forbidden subsystems in unique shortest paths beyond the inconsistency, a natural followup question is to ask whether there are forbidden subsystems beyond the inconsistency in \emph{approximate} shortest paths.
An interesting recent paper by Cizma and Linial \cite{CL26} has essentially answered this question in the undirected setting.
They showed that every $n$-node consistent undirected path system can be induced as $O(n^{1/2})$-approximate shortest paths in an undirected graph, but there are examples on which one cannot improve this approximation factor beyond $n^{1/2 - o(1)}$.
However, the corresponding question for directed graphs remains an interesting open question.

Another problem studied in followup work, perhaps in a similar spirit, is to investigate the extent to which strong metrizability changes if we bound the edge weights of the witnessing graph.
For example, perhaps we require the edge weights to have bounded aspect ratio, or to be bounded integers.
This problem is investigated in \cite{BBW24}, including analogous questions for approximate shortest paths; many open problems remain.

\paragraph{Geodesic Graphs.}

Not every graph is capable of hosting one of the new forbidden subsystems from this paper.
On sufficiently simple graphs $G$, it could be the case that every consistent path system can be induced as (unique) shortest paths under some edge weights.
In work following publication of this paper, a structural theory of such graphs has started to emerge \cite{CL22, CL23, CCL26, CCL25}; among other directions, this work aims to characterize such graphs via forbidden minors.

Besides the focus on metrizability as a property of graphs rather than path systems, there are some further technical differences between our work and this developing theory.
A minor one is that our focus is more on the directed setting, whereas these papers mainly focus on the undirected setting.
More importantly, these papers focus on \emph{all-pairs} path systems, which must contain exactly one path between each pair of nodes; by contrast, we consider \emph{partial} path systems.
This line of work is active; many interesting open problems in this research direction remain.

\subsection{Paper Outline}

In Section \ref{sec:background}, we introduce preliminary definitions and facts about path systems and strong metrizability.
In Section \ref{sec:refutation}, we prove an initial characterization theorem for strong metrizability, stating roughly that a path system $S$ is strongly metrizable if and only if it is simple, consistent, and there is no other distinct ``weighted'' path system $S'$ that shares its topological boundary.
In Section \ref{sec:simplification}, we obtain our main characterization theorems by relating our previous characterization to polyhedral pairs.


\section{Paths, Path Systems, and Strong Metrizability \label{sec:background}}

We will begin with a list of introductory definitions and facts about the objects we study.
In an effort to preserve relative brevity, we will frequently omit or sketch proofs in this section, as they typically follow straightforwardly from the definitions.

\subsection{Definitions about Paths and Path Systems}

Recall from the introduction that a \emph{path} $\pi$ is a nonempty sequence of nodes from a ground set $V$.
A path is \emph{simple} if it does not repeat nodes at all.
If $s$ is the first node and $t$ is the last node of a path $\pi$, then we say that $\pi$ is an $s \leadsto t$ path, or that $\pi$ has endpoints $(s, t)$.
A path is a \emph{cycle} if it contains at least two nodes (counting repeats) and its endpoints are equal, i.e., it is an $s \leadsto s$ path.
It is a \emph{simple cycle} if no other nodes are repeated besides its start/endpoint (note that a simple cycle is not a simple path, since it repeats a node).
For a path $\pi$, we write
\begin{align*}
(u, v) \in \pi &\implies u, v \text{ appear adjacently and in that order in } \pi \\
u < v \in \pi &\implies u \ne v \text{ appear in that order (but not necessarily adjacently) in } \pi \\
u \le v \in \pi &\implies u < v \in \pi \text{ or } u=v \in \pi
\end{align*}
This notation may be ambiguous for a non-simple path, but we will clarify in context as needed.
We say that $\pi'$ is a \emph{subpath} of $\pi$, written $\pi' \subseteq \pi$, if $\pi'$ is a subsequence of $\pi$.
This includes the case where $\pi'$ is a \emph{non-contiguous} subsequence of $\pi$; we will clarify in context when $\pi'$ is specifically a contiguous subpath.
If $u \le v \in \pi$, then we write $\pi[u \leadsto v]$ to denote the contiguous subpath of $\pi$ with endpoints $(u, v)$.

Recall from the introduction that a path system $S = (V, \Pi)$ is \emph{strongly metrizable} if there is a directed weighted graph in which the paths in $S$ are each the unique shortest path between their endpoints.
As is well known, we may assume that the witnessing graph does not contain cycles of nonpositive weight:
\begin{fact} \label{fct:nonpos}
If $S = (V, \Pi)$ is strongly metrizable, then there exists a witnessing graph $G = (V, E, w)$ in which every cycle $c$ has $w(c) > 0$.\footnote{If there is a cycle $C$ with nonpositive weight, and a path $\pi$ that intersects the cycle, then $\pi$ can always take an additional lap around $C$ without getting longer.  Hence $\pi$ would not be considered a unique shortest path between its endpoints.}
\end{fact}

We will also discuss \emph{weighted} path systems $S = (V, \Pi, w)$, where the paths in $\Pi$ are equipped with positive path weights (unweighted systems are equivalent to weighted systems with unit path weights).
Confusingly, this means that there are two distinct notions of path weights in this paper: the formal weight of the path in a path system, and occasionally the total weight of the edges along a path through a particular graph.
The former notion will be more common, but we will take care to differentiate these in context.

We note that these definitions (and the following ones) correspond to the \emph{directed} version of the problem discussed in the introduction; discussion of the undirected setting will come later.

\subsection{Definitions about Algebra}

For a set $Z$, we write $\rr^Z$ to denote the vector space of formal linear combinations of the elements of $Z$ with coefficients in $\rr$.
The positive orthant of $\rr^Z$ is denoted by
$$\rrp^Z := \left\{a \in \rr^Z \ \mid \ a \ge 0 \text{ (entrywise)} \right\}.$$
For any element $z \in Z$, we will use $z$ interchangeably with the canonical basis vector in $\rr^Z$, whose entry indexed by $z$ is $1$ and all other entries are $0$.
To denote distinct ordered pairs from a set $V$, we use permutation notation, writing
$$\vpp := \left\{ (u, v) \ \mid \ u \ne v \in V \right\}.$$
We will use two boundary operators:
\begin{definition}[The Boundary Operator $\partial_2$] \label{def:bd3}
The second boundary operator $\partial_2$ is a map defined as follows: for any path $\pi = (v_1, \dots, v_k)$ we define
$$\partial_2(\pi) := \left( \sum \limits_{i=1}^{k-1} \left(v_i, v_{i+1}\right) \right) - \left(v_1, v_k\right) \in \rr^{\vpp},$$
where terms of the form $(v, v)$ (which arise if $\pi$ repeats a node twice in a row, or starts and ends at the same node) are ignored.
This map then extends linearly: letting $p$ be a formal linear combination of paths with coefficients in $\rr$, i.e.,
$$p = \sum_{\pi} \lambda(\pi) \cdot \pi$$
where each $\lambda(\pi) \in \rr$, then we define
$$\partial_2 \left( p \right) := \sum_{\pi} \lambda(\pi) \cdot \partial_2(\pi).$$
\end{definition}

\begin{definition}[The Boundary Operator $\partial_1$]
The first boundary operator $\partial_1$ is a map defined as follows: for any given $(u, v) \in \vpp$, we have
$$\partial_1(u, v) := -u + v \in \rr^V.$$
This map similarly extends linearly over $\rr^{\vpp}$.
\end{definition}
As is standard, we will often suppress the subscripts and refer to both operators $\partial_2, \partial_1$ as $\partial$, since the distinction is always clear from the argument.
The name ``boundary operator'' is inspired by \cite{GLMY13}, which contains a more in-depth discussion of the relationship between these boundary operators and the ones typically used for simplicial complexes.
For now, we will justify this choice of terminology by observing that a boundary of a boundary is identically zero:

\begin{fact} \label{fct:bdbd}
The composition $\partial_1 \circ \partial_2$ is the zero map.
\end{fact}
\begin{proof}
By linearity, it suffices to show that $\partial(\partial(\pi)) = 0$ for any given path $\pi = (v_1, \dots, v_k)$.
We compute:
\begin{align*}
\partial\left( \partial(\pi) \right) &= \partial \left( \left( \sum \limits_{i=1}^{k-1} \left(v_i, v_{i+1}\right) \right) - \left(v_1, v_k\right) \right)\\
&= \left( \sum \limits_{i=1}^{k-1} \partial\left(v_i, v_{i+1}\right) \right) - \partial\left(v_1, v_k\right)\\
&= \left( \sum \limits_{i=1}^{k-1} -v_i + v_{i+1} \right) - \left(-v_1 + v_k\right)\\
&= \left(-v_1 + v_k \right) - \left(-v_1 + v_k\right)\\
&= 0. \tag*{\qedhere}
\end{align*}
\end{proof}

It will be useful to observe the following relationship between boundaries and path lengths.
We denote by $\langle \cdot, \cdot \rangle$ the standard Euclidean inner product, noting that a weight function $w$ may be viewed as a vector in $\rr^{\vpp}$.

\begin{fact} \label{fct:lenbd}
For any $s \leadsto t$ path $\pi$ in a graph $G = (V, E, w)$, we have
$w(\pi) = \left\langle\partial(\pi) + (s, t), w\right\rangle.$
\end{fact}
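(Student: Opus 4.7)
The plan is a direct computation that is essentially a telescoping of the definitions. I would begin by writing $\pi = (v_1, \ldots, v_k)$ with $v_1 = s$ and $v_k = t$, and unfolding Definition \ref{def:bd3} on $\pi$. This gives
$$\partial(\pi) = \left(\sum_{i=1}^{k-1} (v_i, v_{i+1})\right) - (s, t)$$
as an element of $\rr^{\vpp}$, where any diagonal coordinate $(v, v)$ is dropped by the restriction to $\vpp$.

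Adding the basis vector $(s, t)$ cancels the endpoint term, leaving
$$\partial(\pi) + (s, t) = \sum_{i=1}^{k-1} (v_i, v_{i+1}).$$
Taking the dot product with $w$, interpreted as an element of $\rr^{\vpp}$ by extending with zeros outside of $E$, then yields $\sum_{i=1}^{k-1} w_{(v_i, v_{i+1})}$, which is exactly $|\pi|_G$ by equation \eqref{eq:pathlen}. This completes the proof.

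The only subtlety I anticipate is the edge case $s = t$ (so $\pi$ is a cycle), in which $(s, t) \notin \vpp$: here the $-(s, t)$ term in $\partial'(\pi)$ is already dropped by the restriction, and the added $(s, t)$ likewise lies outside $\vpp$, so both sides of the cancellation contribute zero and the computation proceeds unchanged (assuming the standard convention that $w$ carries no self-loop entries). Beyond this bookkeeping, there is no real technical obstacle; the fact is an immediate sanity check that the boundary operator $\partial_2$ is set up precisely so that pairing $\partial(\pi) + (s,t)$ against $w$ recovers the weighted path length.
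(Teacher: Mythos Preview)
Your proposal is correct and matches the paper's own proof almost exactly: both simply unfold Definition~\ref{def:bd3} and equation~\eqref{eq:pathlen} and observe that the $-(s,t)$ term cancels with the added $(s,t)$. The paper runs the chain of equalities starting from $|\pi|_G$ and ending at $(\partial(\pi)+(s,t))\cdot w$, whereas you go in the reverse direction, but the content is identical; your remark on the $s=t$ edge case is a small bonus the paper does not spell out.
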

\begin{proof}
We compute:
\begin{align*}
w(\pi) &= \sum \limits_{i=1}^{k-1} w(v_i, v_{i+1})\\
&= \left\langle \sum \limits_{i=1}^{k-1} \left(v_i, v_{i+1}\right), w \right\rangle\\
&= \left \langle \partial(\pi) + (s, t), w \right\rangle \tag*{\text{by Definition \ref{def:bd3}.} \qedhere}
\end{align*}
\end{proof}

We will also apply boundary operators over path systems: for $S = (V, \Pi, w)$, we have
$$\partial(S) := \sum \limits_{\pi \in \Pi} w(\pi) \cdot \partial(\pi),$$
with unweighted path systems interpreted as having unit path weights.


\subsection{Reducing Path Systems}

There are a few simple ways in which one can change a path system without changing its boundary.
We will say that a path system is \emph{reduced} when none of the following operations can still be applied.

\paragraph{Removing Isolated Nodes.} For $S = (V, \Pi)$, if there are any nodes $v \in V$ that do not participate in any paths in $\Pi$, then we may delete $v$ from $V$ without affecting the strong metrizability of $S$.

\paragraph{Nontriviality.}

Note that $\partial(\pi) := 0$ in the case where $\pi$ has only one or two nodes.
We will say that such a path is \emph{trivial}, and that a path system is \emph{nontrivial} if all paths with $1$ or $2$ nodes have been removed.
Part of the justification for removing such paths is the following fact:

\begin{fact} \label{fct:pathdelete}
Let $S, S'$ be path systems that differ by a path $\pi$ with $1$ or $2$ nodes.
If both systems are simple and consistent, then $S$ is strongly metrizable if and only if $S'$ is strongly metrizable.
\end{fact}
\begin{proof} [Proof Sketch]
This is clear for $1$-node paths, and for $2$-node paths $\pi = (u, v)$ it follows by considering a witness graph $G$, and adding the edge $(u, v)$ with weight $\dist_G(u, v) - \eps$.
If $\eps > 0$ is selected sufficiently small -- say, less than the minimum positive difference between the lengths of any two simple paths in $G$ -- then this new edge will not change the shortest paths of $G$, besides those paths $\pi$ that have $u < v \in \pi$.
\end{proof}

\paragraph{Semisimplicity.}

A path $\pi$ that is neither a simple path nor a simple cycle can be split into two shorter paths, over its repeated node, as in Figure \ref{fig:semisimple}.
If $\pi$ has a weight $w(\pi)$, then we assign the same weight to each of the two new paths.
The two new paths (added together) will then have the same boundary as the original path.\footnote{Recall: since the cyclic part has the same start and end node $v$, this contributes a term of the form $-(v, v)$, which is ignored by the boundary operator.}

\begin{figure}[h]
\begin{center}
\begin{tikzpicture}

\draw [ultra thick, ->] (0, -0.4) -- (0, 2.4);
\draw [ultra thick] (-0.5, 1) circle [radius=0.5];
\draw [fill=black] (0, 1) circle [radius=0.15cm];

\draw [ultra thick, ->] (0.5, 1) -- (1.5, 1);

\begin{scope}[shift={(3, 0)}]
\draw [ultra thick, ->, gold] (0, -0.4) -- (0, 2.4);
\draw [ultra thick, teal] (-0.5, 1) circle [radius=0.5];
\draw [fill=black] (0, 1) circle [radius=0.15cm];
\end{scope}
\end{tikzpicture}
\end{center}
\caption{\label{fig:semisimple} Splitting a path into two paths over its repeated node does not change its boundary.}
\end{figure}

By repeating this operation, we can reach a path system in which every path is either a simple path or a simple cycle.
We will say that such a path system is \emph{semisimple} (emphasizing that simple cycles are not special cases of simple paths).
We note that a path system with a simple cycle is not strongly metrizable, so this operation does preserve strong metrizability, although only in the vacuous sense.

\paragraph{Skip-Free.}

Let us say that a path $\pi$ \emph{skips} a path $\pi'$ if: $\pi'$ has endpoints $(s, t)$ and at least one intermediate node, and also we have $(s, t) \in \pi$.
When this occurs, we can merge $\pi$ and $\pi'$ into a single path by inserting $\pi'$ as the $s \leadsto t$ subpath of $\pi$.
This operation again does not change the boundary of a path system.

\begin{figure}[h]
\begin{center}
\begin{tikzpicture}

\draw [fill=black] (0, 0.75) circle [radius=0.15cm];
\draw [fill=black] (0, 1.25) circle [radius=0.15cm];
\draw [ultra thick, ->, gold] (0, -0.4) -- (0, 2.4);
\draw [ultra thick, ->, teal] plot [smooth] coordinates {(0, 0.75) (-0.5, 0.5) (-1, 0.75) (-1, 1.25) (-0.5, 1.5) (0, 1.25)};

\draw [ultra thick, ->] (0.5, 1) -- (1.5, 1);

\begin{scope}[shift={(3, 0)}]
\draw [ultra thick, ->] plot [smooth] coordinates {(0, -0.4) (0, 0.75) (-0.5, 0.5) (-1, 0.75) (-1, 1.25) (-0.5, 1.5) (0, 1.25) (0, 2.4)};
\end{scope}

\end{tikzpicture}
\end{center}
\caption{Merging paths where one ``skips'' an edge of the other does not change its boundary.}
\end{figure}

This operation may also be applied to a weighted path system $S = (V, \Pi, w)$, in which case it only merges $\min\{w(\pi), w(\pi')\}$ units of path weight.
For example: if $w(\pi) < w(\pi')$, then the operation is to delete $\pi$, change the weight of $\pi'$ to $w(\pi') - w(\pi)$, and add the merged path with weight $w(\pi)$.

In a simple system, a skip implies that the system is inconsistent, which in turn implies that it is not strongly metrizable.
So, like the previous operation, this one vacuously preserves strong metrizability.
To summarize, we have:

\begin{fact} \label{fct:tosm}
For every path system $S = (V, \Pi, w)$, there is a reduced path system $S'$ with $\partial(S) = \partial(S')$.
Moreover, if $S$ is unweighted, simple, and consistent, then $S$ is strongly metrizable if and only if $S'$ is strongly metrizable.
\end{fact}

The only part of this fact that does not follow from the previous discussion is that, when we perform the previous operations until no longer possible, we will eventually terminate.
For completeness, this can be argued as follows:
\begin{itemize}
\item Each time we delete an isolated node or a trivial path, we reduce the number of nodes/paths by $1$, which can only occur a finite number of times.

\item Each time we apply the \emph{simplicity} splitting operation, or the \emph{skip-free} merging operation, notice that the potential function
$$\sum \limits_{\pi \in \Pi, \pi \text{ is not a cycle}} w(\pi) \cdot |\pi|$$
must decrease (except in the case where we split a non-simple cycle into two cycles, where it remains the same, but this happens only finitely often).
However, potential must remain nonnegative.\footnote{If path weights are integers, this clearly implies termination in finitely many rounds.  If path weights are rational, we could rescale them to be integers, so we also get termination in finitely many rounds; that setting will suffice for all applications in this paper.  If path weights are real numbers, then a slightly more involved argument is technically needed, e.g., considering the limit of the path systems reached by these operations.  This is not needed for our applications, so we omit it.}
\end{itemize}

\section{Characterizations of Strong Metrizability \label{sec:refutation}}

We will now start to prove some theorems that characterize the structure of strong metrizability.

\subsection{A Flow-Based Characterization Theorem}

The following background on flows is standard, although the phrasing in our algebraic framework is perhaps unusual.
\begin{definition} [Flows]
An \emph{$s \leadsto t$ flow} is a vector $f \in \rrp^{\vpp}$ satisfying $\partial(f) = \lambda(-s + t)$ for some $\lambda \in \rrp$.
The parameter $\lambda$ is called the \emph{value} of the flow $f$, and will be written $\lambda(f)$ where convenient.\footnote{The value of an $s \leadsto s$ flow (``a circulation'') is not well defined, and will not be needed.}
\end{definition}

\begin{definition} [Multiflows]
A $k$-multiflow $F = (f^1, \dots, f^k)$ over a graph $G = (V, E)$ is an ordered list of $k$ flows (with possibly different endpoints), where the support of each flow is contained in $E$.
\end{definition}

The following definition is new to this paper, and a centerpiece of our characterization:
\begin{definition} [Rigid Multiflows] \label{def:rigid}
A $k$-multiflow $F = (f^1, \dots, f^k)$ is \emph{rigid} if there is no distinct $k$-multiflow $F' = (f'^1, \dots, f'^k) \ne F$ satisfying:
\begin{enumerate}
\item $\partial\left(f^i\right) = \partial\left(f'^i\right)$ for all $1 \le i \le k$,\footnote{Recall that this is the \emph{first} boundary operator, so this is equivalent to saying that the flows have the same value and the same endpoints.}
\item $\sum \limits_{i=1}^{k} f^i = \sum \limits_{i=1}^k f'^i$ \end{enumerate}
\end{definition}

Our next goal is to prove the following characterization theorem.
It will occasionally be convenient in the following arguments to assume that the path systems in question contain at least two paths; the one-path case is essentially trivial, so it is ignored.

\begin{definition} [$\eff(S)$]
Let $S = (V, \Pi = \{\pi_1, \dots, \pi_k\}, w)$ be a path system and let $s_i \leadsto t_i$ be the endpoints of each $\pi_i$.
We define $\eff(S) = (f^1, \dots, f^k)$ to be the $k$-multiflow where for each $1 \le i \le k$, we have
$$f^i := w(\pi_i) \left(\partial\left(\pi_i\right) + \left(s_i, t_i\right)\right);$$
that is, $f^i$ is the flow along $\pi_i$ of value $w(\pi_i)$.\footnote{Technically,  $\eff(S)$ is an \emph{ordered list} whereas $\Pi$ is an \emph{unordered set}.  So part of the definition of $\eff(S)$ involves fixing an arbitrary order on the paths $\pi \in \Pi$.}
\end{definition}

\begin{theorem} \label{thm:rigidflow}
A simple path system $S = (V, \Pi)$ with $|\Pi| \ge 2$ is strongly metrizable if and only if $\eff(S)$ is rigid.
\end{theorem}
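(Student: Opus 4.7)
The plan is to establish both implications by combining the path decomposition of flows (Fact~\ref{fct:pathdecomp}) with a Gordan-style separation argument.

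For the forward direction, suppose $S$ is \sm as witnessed by $G = (V, E, w)$ and suppose, for contradiction, that some $F' = (f'^1, \dots, f'^k) \ne \eff(S)$ shares boundaries and total sum with $\eff(S) = (f^1, \dots, f^k)$. Since $\sum_i f'^i = \sum_i f^i$ has support in $E$ and each $f'^i \ge 0$, every $f'^i$ is supported on $E$. Moreover, $f'^i$ is an $s_i \leadsto t_i$ flow of value $w_{\pi_i}$, so via Fact~\ref{fct:pathdecomp} it may be written as a non-negative combination $f'^i = \sum_q y_{i,q}\bigl(\partial(q) + (s_i, t_i)\bigr) + \sum_c z_{i,c}\,\partial(c)$ with $\sum_q y_{i,q} = w_{\pi_i}$, where $q$ ranges over simple $s_i \leadsto t_i$ paths and $c$ over simple cycles in $E$. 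Dotting with $w$ gives $f'^i \cdot w = \sum_q y_{i,q}|q|_G + \sum_c z_{i,c}|c|_G$. By the s.m.\ hypothesis and Fact~\ref{fct:simplepaths}, $\pi_i$ is the unique shortest $s_i \leadsto t_i$ simple path in $G$ and every simple cycle in $G$ has strictly positive weight, so $f'^i \cdot w \ge w_{\pi_i}|\pi_i|_G = f^i \cdot w$ with equality iff $y_{i,q} = 0$ for $q \ne \pi_i$ and $z_{i,c} = 0$ for all $c$, i.e.\ iff $f'^i = f^i$. Summing and using $\sum f'^i = \sum f^i$ then forces termwise equality, giving $F' = \eff(S)$, the desired contradiction.

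For the reverse direction, assume $\eff(S)$ is rigid, set $E := \supp\bigl(\sum_i f^i\bigr)$, and seek a weighting $w \in \rr^E$ making $G = (V, E, w)$ witness strong metrizability of $S$. By Facts~\ref{fct:simplepaths} and~\ref{fct:lenbd}, this amounts to solving the system of strict linear inequalities $w \cdot \bigl(\partial(q) - \partial(\pi_i)\bigr) > 0$ for every $i$ and every simple $s_i \leadsto t_i$ path $q \ne \pi_i$ using only edges of $E$, together with $w \cdot \partial(c) > 0$ for every simple cycle $c$ in $E$. By Gordan's theorem, either such $w$ exists (and we are done), or else there exist coefficients $\mu_{i,q}, \nu_c \ge 0$, not all zero, with
\[
\sum_{i,q} \mu_{i,q}\bigl(\partial(q) - \partial(\pi_i)\bigr) \; + \; \sum_c \nu_c\,\partial(c) \;=\; 0.
\]
I would use such a combination to build a perturbation $d$ contradicting rigidity. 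Define $d^i := \sum_q \mu_{i,q}\bigl(\partial(q) - \partial(\pi_i)\bigr)$ for $i \ge 2$ and $d^1 := \sum_q \mu_{1,q}\bigl(\partial(q) - \partial(\pi_1)\bigr) + \sum_c \nu_c\,\partial(c)$. Each $d^i$ has $\partial(d^i) = 0$ (using Fact~\ref{fct:bdbd}), and $\sum_i d^i = 0$ by the Gordan identity. On any edge $e \notin \pi_i$ we have $f^i_e = 0$ and $d^i_e \ge 0$, since only non-negative terms contribute at $e$; hence $\eff(S) + \epsilon\,d$ is a non-negative multiflow with the same boundaries and sum as $\eff(S)$ for all sufficiently small $\epsilon > 0$.

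The main remaining task is to verify $d \ne 0$, after which rigidity is contradicted. If $\mu \equiv 0$, the Gordan identity becomes $\sum_c \nu_c\,\partial(c) = 0$; but each $\partial(c)$ is a non-negative edge indicator with a positive entry on every edge of $c$, so any non-trivial non-negative combination is non-zero, forcing $\nu \equiv 0$ and contradicting non-triviality. Hence some $\mu_{i_0, q_0} > 0$; since $q_0$ and $\pi_{i_0}$ are distinct simple $s_{i_0} \leadsto t_{i_0}$ paths they have distinct edge supports, so picking any $e \in q_0 \setminus \pi_{i_0}$ gives $d^{i_0}_e \ge \mu_{i_0, q_0} > 0$. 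I expect this last step, together with the careful bookkeeping to ensure $d$ is genuinely non-zero despite the cycle corrections dumped into $d^1$, to be the main obstacle; the forward direction is comparatively routine once Fact~\ref{fct:pathdecomp} is in hand.
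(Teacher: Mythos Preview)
Your proposal is correct and follows the paper's overall strategy: both directions ultimately rest on Gordan's Lemma applied to the system of strict inequalities ``$\pi_i$ beats every other simple $s_i\leadsto t_i$ path and every cycle is positive.'' Your reverse direction is a close variant of the paper's Lemma~\ref{lem:arigidbac}: where the paper rescales the dual certificate and writes down $F'$ explicitly, you keep the raw Gordan coefficients $\mu,\nu$ and perturb $\eff(S)$ by $\epsilon d$; the content is the same. One small omission: when you argue $d\ne 0$ you assert that $\pi_{i_0}$ is simple (so that two distinct simple $s_{i_0}\leadsto t_{i_0}$ paths must differ on some edge), but you have not justified this. You should invoke Fact~\ref{fct:rigidsimple} at the outset of the reverse direction, exactly as the paper does at the start of Lemma~\ref{lem:arigidbac}, to deduce simplicity of $S$ from rigidity and $|\Pi|\ge 2$.

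Your forward direction is a genuine simplification relative to the paper. The paper handles that direction by building a dual vector $y$ from a non-rigidity witness and showing $yA=0$ (Lemma~\ref{lem:arigidfor}), then appealing to Gordan; you instead dot the putative witness $F'$ against the weight vector $w$ of the witnessing graph, use the path decomposition plus strict optimality of each $\pi_i$ and positivity of cycles to get $f'^i\cdot w\ge f^i\cdot w$ termwise, and conclude equality (hence $F'=\eff(S)$) from $\sum f'^i=\sum f^i$. This bypasses the matrix bookkeeping entirely for the easy direction and is cleaner.
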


Let us run through some high-level intuition behind Theorem \ref{thm:rigidflow} before launching into the technical details.
Given a path system $S = (V, \Pi)$, imagine that we have some graph $G$ that contains $S$, and we place a walker at the start node of each $\pi \in \Pi$, and we want them to each walk to the corresponding end node.
We suggest to each walker that they take the route indicated by their path $\pi$.
However, the walkers discuss the plan and realize that they can collectively disobey our suggestions, rerouting their paths in some clever way, such that they still end at the same destinations and they collectively cover the same ground (i.e. the same number of walkers pass over each edge of $G$ as if the suggested paths $\pi \in \Pi$ had been taken).
If this rerouting is possible, then it acts as a certificate that not all $\pi \in \Pi$ can be a unique shortest path in any graph $G$ (regardless of its weights): some walkers chose alternate paths, and yet the total distance travelled by all walkers did not increase.

We can think of these alternate paths chosen by the walkers, which prove that $S$ is not strongly metrizable, as an unsplit multiflow $F'$ witnessing non-rigidity of $\eff(S)$.
In general, one can extend this argument to observe that \emph{fractional} reroutings also suffice as witnesses that some path system $S$ is not strongly metrizable, and these correspond to general (possibly split) multiflows $F'$ witnessing non-rigidity of $\eff(S)$.
Finally, Theorem \ref{thm:rigidflow} is saying that these fractional rerouting certificates of non-strong metrizability are exhaustive.
The intuition behind this last step is much like the intuition behind LP duality, although there is a technical difference: we rely on \emph{Gordan's Lemma}, a theorem of the alternative similar to Farkas' Lemma (the usual workhorse behind LP duality) that more precisely suits our needs.

\begin{lemma} [Gordan's Lemma] \label{lem:gordan}
Let $A \in \rr^{m \times n}$.  Then exactly one of the following two statements is true:
\begin{enumerate}
\item There exists $w \in \rr^n$ with $Aw > 0$, or
\item There exists $y \ne 0 \in \rrp^m$ with $yA = 0$.
\end{enumerate}
\end{lemma}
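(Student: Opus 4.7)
The plan is to prove Gordan's Lemma via a standard separating hyperplane argument. First I would dispatch mutual exclusivity of (1) and (2). Suppose for contradiction that both held, with $w \in \rr^n$ satisfying $Aw > 0$ and $y \in \rrp^m \setminus \{0\}$ satisfying $yA = 0$. Then on the one hand $y \cdot (Aw) = (yA) w = 0$, while on the other hand $y \cdot (Aw) > 0$ since at least one coordinate of $y$ is strictly positive and every coordinate of $Aw$ is strictly positive. This contradiction shows the two cases cannot coexist.

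For the substantive direction, assume (1) fails and construct $y$ witnessing (2). Let $C := \{Aw : w \in \rr^n\} \subseteq \rr^m$, which is a linear subspace, and let $K := \{v \in \rr^m : v > 0\}$, the open positive orthant. The failure of (1) says exactly that $C \cap K = \emptyset$. Since $C$ and $K$ are disjoint nonempty convex sets with $K$ open, I would invoke the geometric Hahn--Banach theorem to produce a nonzero $y \in \rr^m$ and a scalar $\alpha \in \rr$ with
$$ y \cdot c \le \alpha \le y \cdot v \qquad \text{for all } c \in C, \ v \in K.$$

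It remains to extract the two required properties of $y$. Because $C$ is a linear subspace containing $0$, the condition $y \cdot c \le \alpha$ for all $c \in C$ forces $y \cdot c = 0$ identically on $C$ (otherwise we could rescale any offending $c$ by an arbitrarily large positive constant to violate the bound), giving $yA = 0$ and incidentally $\alpha \ge 0$. Next, the condition $y \cdot v \ge \alpha \ge 0$ holds for all $v \in K$ and thus extends by continuity to the closure $\overline{K} = \rrp^m$; applying it to the standard basis vectors $e_i \in \rrp^m$ yields $y_i \ge 0$ for every $i$, i.e., $y \in \rrp^m$. Combined with $y \ne 0$, this is exactly (2). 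The main subtlety is choosing the right separation statement: since the subspace $C$ may have empty interior in $\rr^m$, the sets cannot generally be strictly separated, but the version of Hahn--Banach that separates a convex set from a disjoint nonempty open convex set suffices and yields the genuinely nonzero functional $y$ we need.
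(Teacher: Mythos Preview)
Your proof is correct and is one of the standard separating-hyperplane arguments for Gordan's Lemma. Note, however, that the paper does not supply its own proof of this statement: Gordan's Lemma is invoked as a classical theorem of the alternative and stated without proof, so there is no paper-side argument to compare against. Your write-up would serve perfectly well as a self-contained justification, should one be desired.
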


We now begin to prove Theorem \ref{thm:rigidflow}.
Let $S = (V, \Pi)$ be a path system hosted by a graph $G = (V, E)$, and let $s_i \leadsto t_i$ be the endpoints of each $\pi_i \in \Pi$.
We shall convert strong metrizability of $S$ into a statement about matrix math, so that we can apply Gordan's Lemma.
Here, the following matrices will be useful:
\begin{itemize}
\item Let $Q$ be the set of pairs of paths of the form $(q, \pi_i)$, where $\pi_i \in \Pi$, and $q \ne \pi_i$ is any distinct simple path between the same endpoints.
Let $A_Q \in \rr^{Q \times \vpp}$ be the linear map defined over its basis vectors $(q, \pi_i)$ by
$$\left(q, \pi_i\right) \cdot A_Q = \partial\left(q - \pi_i\right) \qquad \text{for all } \left(q, \pi_i\right) \in Q.$$
\item Let $\chi$ be the set of simple cycles over $V$.
Let $A_{\chi} \in \rr^{\chi \times \vpp}$ be the linear map defined by
$$c \cdot A_{\chi} = \partial(c) \qquad \text{for all } c \in \chi.$$

\item Define $A \in \rr^{(Q \cup \chi) \times \vpp}$ by appending $A_Q$ and $A_{\chi}$.
\end{itemize}

\begin{lemma} \label{lem:asm}
$S$ is strongly metrizable if and only if there exists $w \in \rr^{\vpp}$ with $Aw > 0$.
\end{lemma}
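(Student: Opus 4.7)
The plan is to read each row of $Aw$ directly as a statement about lengths in the graph $G = (V, \vpp, w)$ on the complete digraph over $V$, and then invoke Fact \ref{fct:simplepaths} to identify those length conditions with strong metrizability of $S$. The key computation applies Fact \ref{fct:lenbd}: for any $(q, \pi_i) \in Q$ with shared endpoints $s_i \leadsto t_i$,
\[
(Aw)_{(q,\pi_i)} = \partial(q - \pi_i) \cdot w = \bigl(\partial(q) + (s_i, t_i)\bigr) \cdot w - \bigl(\partial(\pi_i) + (s_i, t_i)\bigr) \cdot w = |q|_G - |\pi_i|_G,
\]
while for any $c \in \chi$, the restriction of $\partial(c)$ to $\vpp$ is exactly the indicator of the edges of $c$, so $(Aw)_c = |c|_G$. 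Hence $Aw > 0$ is equivalent to the conjunction of (i) every simple cycle of $G$ has strictly positive total weight, and (ii) for every $\pi_i \in \Pi$ and every distinct simple $s_i \leadsto t_i$ path $q$, we have $|\pi_i|_G < |q|_G$. These are precisely the two conditions of Fact \ref{fct:simplepaths}.

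For the $(\Leftarrow)$ direction I would simply take $w$ with $Aw > 0$, set $G := (V, \vpp, w)$, and invoke Fact \ref{fct:simplepaths} to conclude that $G$ witnesses strong metrizability of $S$. For the $(\Rightarrow)$ direction, given any witnessing graph $G_0 = (V, E_0, w_0)$, I would extend $w_0$ to $w \in \rr^{\vpp}$ by assigning every non-edge a sufficiently large constant $M$ — e.g., larger than the total weight of any simple path or simple cycle supported on $E_0$; one then verifies routinely that the completed graph $G = (V, \vpp, w)$ still has each $\pi_i$ as a unique shortest path and each simple cycle of positive total weight, so applying Fact \ref{fct:simplepaths} to $G$ gives $Aw > 0$.

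The only subtlety is the mismatch between the statement's ``$w \in \rr^E$'' and the fact that $\partial$ naturally lives in $\rr^{\vpp}$; this is resolved by taking $E = \vpp$ throughout, which costs nothing in the forward direction thanks to the weight-completion step above. With that convention fixed, the proof reduces to the mechanical two-line translation between rows of $Aw$ and the length inequalities of Fact \ref{fct:simplepaths}.
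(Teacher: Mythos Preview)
Your proof is correct and follows essentially the same approach as the paper: translate each row of $Aw$ into a length inequality via Fact~\ref{fct:lenbd} (and the observation that $\partial(c)\cdot w = |c|_G$ for simple cycles), then identify the resulting conditions with the two items of Fact~\ref{fct:simplepaths}. You are in fact slightly more explicit than the paper about the convention $E = \vpp$ and the weight-extension step in the $(\Rightarrow)$ direction, which the paper leaves implicit.
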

\begin{proof}
The path system $S$ is strongly metrizable if and only if we can equip the complete directed graph $G = (V, \vpp)$, with a weight vector $w \in \rr^{\vpp}$ such that:
\begin{enumerate}
\item For each $\pi_i \in \Pi$ and each simple path $q \ne \pi_i$ with the same endpoints, we have $w(\pi_i) < w(q)$, and
\item Every cycle $c$ in $G$ has $w(c) > 0$.
\end{enumerate}
By Fact \ref{fct:lenbd}, we can rewrite (1) as
\begin{align*}
\left\langle \partial\left(\pi_i\right) + (s, t), w \right\rangle &< \left\langle\partial(q) + (s, t), w \right\rangle && \text{for all } \left(q, \pi_i\right) \in Q\\
0 &< \left\langle \partial\left(q - \pi_i\right), w\right\rangle && \text{for all } \left(q, \pi_i\right) \in Q\\
0 &< A_Q w.
\end{align*}
Similarly, we can rewrite (2) as
\begin{align*}
\left\langle \partial(c), w\right\rangle &> 0  &&\text{for all } c \in \chi\\
A_{\chi} w &> 0.
\end{align*}
Thus $S$ is strongly metrizable if and only if there is $w \in \rr^{\vpp}$ with $A_Q w > 0$ and $A_{\chi} w > 0$, which is equivalent to $Aw > 0$.
\end{proof}

Our next goal is to equate rigidity to the other condition of Gordan's Lemma.
The following background fact about flow will be useful.

\begin{fact} \label{fct:pathdecomp}
For any $s \leadsto t$ flow $f$, there exists a positive linear combination of simple paths and simple cycles $p$, such that $\partial(p) = f - \lambda(f) \cdot (s, t)$.
\end{fact}

This fact can be proved, e.g., by a straightforward inductive process that peels off a simple path or simple cycle from the remaining flow in each round.
(This will not yield a \emph{unique} $p$ satisfying the fact -- the order of the peeling matters -- but any such $p$ may be used.)

\begin{lemma} \label{lem:arigidfor}
If $\eff(S)$ is not rigid, then there exists $y \ne 0 \in \rrp^{Q \cup \chi}$ with $yA = 0$.
\end{lemma}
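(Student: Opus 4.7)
The plan is to construct the vector $y$ explicitly from a flow decomposition of the witnessing multiflow for non-rigidity. Since $\eff(S) = (f^1, \dots, f^k)$ is not rigid, there is some $F' = (f'^1, \dots, f'^k) \ne \eff(S)$ with $\partial(f'^i) = \partial(f^i)$ for every $i$ (so $\lambda_{f'^i} = w_{\pi_i}$) and $\sum_i f'^i = \sum_i f^i$. I first apply a normalized form of Fact~\ref{fct:pathdecomp} to each $f'^i$, producing $x^i \in \rrp^{\chi}$ and $p^i \in \rrp^{\Psi_{s_i \leadsto t_i}}$ with
\[
f'^i \;=\; \sum_{c \in \chi} x^i_c\,\partial(c) \;+\; \sum_{q \in \Psi_{s_i \leadsto t_i}} p^i_q\bigl(\partial(q) + (s_i, t_i)\bigr) \qquad \text{and} \qquad \sum_q p^i_q = w_{\pi_i}.
\]
For $s_i \neq t_i$ this normalization is obtained from Fact~\ref{fct:pathdecomp} by folding the residual $(s_i, t_i)$ weight into the coefficient on the two-node simple path $(s_i, t_i)$; for $s_i = t_i$ I simply set $p^i_{(s_i)} := w_{\pi_i}$ on the trivial one-node path, which is allowed because $\partial((s_i)) + (s_i, s_i) = 0$ in $\rr^{\vpp}$.

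I then set $y_{(q, \pi_i)} := p^i_q$ for $(q, \pi_i) \in Q$ and $y_c := \sum_i x^i_c$ for $c \in \chi$, giving $y \in \rrp^{Q \cup \chi}$. To check $yA = 0$, I expand
\[
(yA)_e = \sum_i \sum_{q \ne \pi_i} p^i_q\bigl(\partial(q) - \partial(\pi_i)\bigr)_e + \sum_c y_c\,\partial(c)_e,
\]
observe that $\partial(q) - \partial(\pi_i) = \bigl(\partial(q) + (s_i, t_i)\bigr) - \bigl(\partial(\pi_i) + (s_i, t_i)\bigr)$ and that extending the inner sum to $q = \pi_i$ contributes zero, and apply the normalization $\sum_q p^i_q = w_{\pi_i}$ along with $w_{\pi_i}\bigl(\partial(\pi_i) + (s_i, t_i)\bigr) = f^i$ to collapse the expression to
\[
\sum_i (f'^i - f^i)_e + \sum_c \Bigl(y_c - \sum_i x^i_c\Bigr)\partial(c)_e.
\]
Both parts vanish by the non-rigidity hypothesis $\sum_i f'^i = \sum_i f^i$ and the defining choice $y_c = \sum_i x^i_c$.

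To see that $y \ne 0$, I analyze any index $i^*$ with $f'^{i^*} \ne f^{i^*}$. If $\pi_{i^*}$ is a simple non-cycle path, the ``trivial'' decomposition $p^{i^*}_{\pi_{i^*}} = w_{\pi_{i^*}}$ with every other coefficient zero would recover $f^{i^*}$ exactly, so the actual decomposition of $f'^{i^*}$ must place a positive coefficient on some $p^{i^*}_q$ with $q \ne \pi_{i^*}$ or on some $x^{i^*}_c$, and thus $y \ne 0$. If $\pi_{i^*}$ is a nontrivial cycle or a non-simple path, then $\pi_{i^*} \notin \Psi_{s_{i^*} \leadsto t_{i^*}}$, so every $q$ appearing in the decomposition automatically satisfies $q \ne \pi_{i^*}$, and the normalization $\sum_q p^{i^*}_q = w_{\pi_{i^*}} > 0$ forces $y_{(q, \pi_{i^*})} > 0$ for at least one $q$. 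The main obstacle is purely technical: establishing the normalization $\sum_q p^i_q = w_{\pi_i}$ uniformly across the path and cycle cases starting from Fact~\ref{fct:pathdecomp}. Once that is in place, the verification of $yA = 0$ and $y \neq 0$ is direct algebraic cancellation driven by the two defining equations of non-rigidity.
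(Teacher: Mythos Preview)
Your proof is correct and follows essentially the same approach as the paper: decompose each $f'^i$ into simple $s_i \leadsto t_i$ paths and cycles via Fact~\ref{fct:pathdecomp}, load the path coefficients into $y^Q$ and the cycle coefficients into $y^{\chi}$, and verify $yA = 0$ by telescoping against $\sum_i f'^i = \sum_i f^i$. The one substantive addition is that you explicitly argue $y \ne 0$, which the paper's own proof leaves unaddressed; your normalization $\sum_q p^i_q = w_{\pi_i}$ (which in fact already holds in the construction inside the proof of Fact~\ref{fct:pathdecomp}, so no ``folding'' is strictly needed) is exactly what makes that verification clean.
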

\begin{proof}
Let $F'$ witness non-rigidity of $\eff(S)$, where
$$\eff(S) = (f^1, \dots, f^{k}) \ne F' = (f'^1, \dots, f'^{k}).$$
Let $(s_i, t_i)$ be the endpoints of each flow $f^i$ (which are the same as the endpoints of $f'^i$).
For each $1 \le i \le k$, let $p'^i$ be a path decomposition of $f'^i \in F'$ as in Fact \ref{fct:pathdecomp}; that is,
$$\partial\left(p'^i\right) = f'^i - \lambda(f'^i) \cdot (s_i, t_i).$$
Then lift $p'^i$ to a new vector $p'^i_Q \in \rrp^Q$, where
$p'^i_Q(q, \pi_i) := p'^i(q)$,
and define
$$y^Q := \sum \limits_{i=1}^k p'^i_Q \in \rrp^Q.$$
Note that the vectors $p'^i$ may also have simple cycles in $\chi$ in their support, which do not affect $y^Q$.
We will collect these separately: let $p'^i_{\chi}$ denote the restriction of $p'^i$ to its entries in $\chi$, and define
$$y^{\chi} := \sum \limits_{i=1}^k p'^i_{\chi} \in \rrp^{\chi}.$$
We then define $y \in \rrp^{Q \cup \chi}$ by concatenating $y^Q$ and $y^{\chi}$.
Note that $y \ne 0$, since $F' \ne F$, so it contains a flow that is not simply the path $\pi_i$, which contributes to $y$.
Additionally, we have:
\begin{align*}
yA &= y^Q A_Q + y^{\chi} A_{\chi}\\
&= \left( \sum \limits_{i=1}^k p'^i_Q \right) A_Q + \left( \sum \limits_{i=1}^k p'^i_{\chi} \right) A_{\chi}\\
&= \left( \sum \limits_{i=1}^k p'^i_Q \right) A_Q + \partial\left( \sum \limits_{i=1}^k p'^i_{\chi} \right) \tag*{by definition of $A_{\chi}$}\\
&= \left( \sum \limits_{i=1}^k \sum \limits_{(q, \pi_i) \in Q} p'^i(q) \cdot \left(q, \pi_i\right) \right) A_Q + \partial\left( \sum \limits_{i=1}^k p'^i_{\chi} \right)\\
&= \left( \sum \limits_{i=1}^k \sum \limits_{(q, \pi_i) \in Q} p'^i(q) \cdot \partial\left(q - \pi_i\right) \right) + \partial\left( \sum \limits_{i=1}^k p'^i_{\chi} \right) \tag*{by definition of $A_Q$}\\
&= \left( \sum \limits_{i=1}^k \sum \limits_{{\substack{q \text{ is a simple }\\s_i \leadsto t_i \text{ path}}}} p'^i(q) \cdot \partial(q) \right) + \partial\left( \sum \limits_{i=1}^k p'^i_{\chi} \right) - \left( \sum \limits_{i=1}^k \sum \limits_{\substack{q \text{ is a simple }\\s_i \leadsto t_i \text{ path}}} p'^i(q) \cdot \partial\left(\pi_i\right) \right)\\
&= \partial \left( \sum \limits_{i=1}^k p'^i \right) - \left( \sum \limits_{i=1}^k \sum \limits_{\substack{q \text{ is a simple }\\s_i \leadsto t_i \text{ path}}} p'^i(q) \cdot \partial\left(\pi_i\right) \right)\\
&= \partial \left( \sum \limits_{i=1}^k p'^i \right) - \left( \sum \limits_{i=1}^k \partial\left(\pi_i\right) \cdot \sum \limits_{\substack{q \text{ is a simple }\\s_i \leadsto t_i \text{ path}}} p'^i(q) \right)\\
&= \partial \left( \sum \limits_{i=1}^k p'^i \right) - \left( \sum \limits_{i=1}^k \partial\left(\pi_i\right) \cdot \lambda(f'^i) \right)\\
&= \partial \left( \sum \limits_{i=1}^k p'^i \right) - \left( \sum \limits_{i=1}^k \partial\left(\pi_i\right) \right) \tag*{since $\lambda(f'^i) = \lambda(f^i) = 1$}\\
&= \partial \left( \sum \limits_{i=1}^k p'^i \right) -  \left( \sum \limits_{i=1}^k f^i - \left(s_i, t_i\right) \right) \tag*{by definition of $\eff(S)$}\\
&= \partial \left( \sum \limits_{i=1}^k p'^i \right) + \sum \limits_{i=1}^k \left(s_i, t_i\right) - \sum \limits_{i=1}^k f'^i \tag*{by definition of (non-)rigidity}\\
&= 0,
\end{align*}
where the last equality follows from the equation in Fact \ref{fct:pathdecomp}.
\end{proof}

We then show the converse of the previous lemma:

\begin{lemma} \label{lem:arigidbac}
If there exists $y \ne 0 \in \rrp^{Q \cup \chi}$ with $yA = 0$, then $\eff(S)$ is not rigid.
\end{lemma}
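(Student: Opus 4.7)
The plan is to reverse-engineer the construction used in Lemma \ref{lem:arigidfor}: from a witness $y \ne 0 \in \rrp^{Q \cup \chi}$ satisfying $yA = 0$, I will build an alternate multiflow $F'$ certifying non-rigidity of $\eff(S)$. Write $y = (y^Q, y^\chi)$ with $y^Q \in \rrp^Q$ and $y^\chi \in \rrp^\chi$. For a small parameter $\eps > 0$ to be fixed later, set
$$f'^i := f^i + \eps \sum_{(q, \pi_i) \in Q} y^Q_{(q, \pi_i)} \bigl( \fe(q) - \fe(\pi_i) \bigr) \qquad \text{for } i \ge 2,$$
and
$$f'^1 := f^1 + \eps \sum_{(q, \pi_1) \in Q} y^Q_{(q, \pi_1)} \bigl( \fe(q) - \fe(\pi_1) \bigr) + \eps \sum_{c \in \chi} y^\chi_c \, \partial(c),$$
loading the cycle contributions into the first coordinate purely for definiteness.

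The algebraic conditions (\ref{eq:valbal}) and (\ref{eq:capbal}) then follow cleanly. Since $\fe(q)$ and $\fe(\pi_i)$ are both $s_i \leadsto t_i$ flows of value one, their boundaries agree, and $\partial(\partial(c)) = 0$ by Fact \ref{fct:bdbd}; hence $\partial(f'^i) = \partial(f^i)$ for every $i$. Summing the corrections across $i$ yields exactly $\eps(y^Q A_Q + y^\chi A_\chi) = \eps \cdot yA = 0$, so $\sum_i f'^i = \sum_i f^i$. Non-negativity $f'^i \ge 0$ is secured by choosing $\eps$ small: each correction is a non-negative combination of the vectors $\fe(q), \partial(c) \in \rrp^{\vpp}$ except for the subtracted multiples of $\fe(\pi_i)$, whose support lies entirely inside $\pi_i$ where $f^i$ already equals $1$; any $\eps$ with $\eps \cdot \max_i \sum_{(q,\pi_i) \in Q} y^Q_{(q, \pi_i)} < 1$ therefore works.

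The one genuinely nontrivial step, and the one I expect to require the most care, is proving $F' \ne F$. I would first rule out $y^Q = 0$: otherwise $yA = \sum_c y^\chi_c \, \partial(c)$, but each $\partial(c)$ is a \emph{nonzero} element of $\rrp^{\vpp}$ (the diagonal term $(v_1, v_1)$ in Definition \ref{def:bd3} vanishes under restriction to $\vpp$, leaving the edges of $c$), so a non-negative combination with $y^\chi \ne 0$ would be strictly positive on some edge, contradicting $yA = 0$. Hence some $y^Q_{(q, \pi_j)} > 0$ with $q \ne \pi_j$; since $q$ and $\pi_j$ are distinct simple $s_j \leadsto t_j$ paths they differ in their edge sets, so one may fix an edge $(u,v)$ lying in $q$ but not in $\pi_j$, on which $f^j_{(u,v)} = 0$ while the $\eps y^Q_{(q,\pi_j)} \fe(q)$ term contributes $+\eps y^Q_{(q, \pi_j)} > 0$ with no cancelling negative term, yielding $f'^j \ne f^j$.

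I do not foresee any serious obstacle; the construction is essentially a direct algebraic inversion of Lemma \ref{lem:arigidfor}, with $yA = 0$ encoding precisely the capacity balance needed for (\ref{eq:capbal}). The one mildly delicate point is the preliminary step above ruling out $y^Q = 0$, since a purely cyclic $y$ might otherwise fail to create any new support relative to $\eff(S)$ and leave $F' = F$.
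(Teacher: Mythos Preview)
Your construction is essentially identical to the paper's: the paper scales $y$ so that $\sum_{(q,\pi_i)\in Q} y_{(q,\pi_i)} \le 1$ for each $i$ and then sets $f'^i = (1-\sum y_{(q,\pi_i)})f^i + \sum y_{(q,\pi_i)}(\partial(q)+(s_i,t_i))$, loading $\partial(y^\chi)$ onto $f'^1$, which is exactly your formula after absorbing your $\eps$ into the scaling of $y$. The verifications of (\ref{eq:valbal}), (\ref{eq:capbal}), and nonnegativity are the same.

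There is, however, a genuine gap in your $F' \ne F$ step. You assert that ``$q$ and $\pi_j$ are distinct \emph{simple} $s_j \leadsto t_j$ paths,'' but nothing in the hypotheses forces $\pi_j \in \Pi$ to be simple; only $q \in \Psi_{s_j \leadsto t_j}$ is. If $\pi_j$ is not simple, your edge-selection argument can fail outright. Concretely, take $\pi_j = (a,b,c,b,d)$, $q = (a,b,d)$, and the cycle $c = (b,c,b)$; set $y^Q_{(q,\pi_j)} = y^\chi_{c} = 1$ and all other entries zero. Then $yA = (\partial(q)-\partial(\pi_j)) + \partial(c) = 0$, yet every edge of $q$ already lies in $\pi_j$, and in fact your correction $\eps\bigl[(\partial(q)+(s_j,t_j)) - f^j + \partial(c)\bigr]$ vanishes identically, giving $f'^j = f^j$. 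With a second disjoint path $\pi_2 \in \Pi$ (so $|\Pi|\ge 2$), your entire $F'$ equals $F$ even though $\eff(S)$ is genuinely non-rigid.

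The paper closes this gap in one line at the outset: if $S$ is not simple, Fact~\ref{fct:rigidsimple} already gives non-rigidity (this uses $|\Pi|\ge 2$, implicit from the surrounding Theorem~\ref{thm:rigidflow}), so one may assume $S$ simple before running your construction. With that reduction in place, your argument that distinct simple $s_j \leadsto t_j$ paths must differ on some edge is correct, and the rest goes through. A minor point: you use $\fe(\cdot)$ without saying what it means; presumably $\fe(\pi) := \partial(\pi) + (s,t)$ for an $s\leadsto t$ path $\pi$, matching (\ref{eq:effdef}).
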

\begin{proof}
If $S$ contains any non-simple paths $\pi$, then $\eff(S)$ is not rigid, and we are done.
This is because we could identify a contiguous simple cycle $x \subseteq \pi$, and then adjust the multiflow $\eff(S)$ by subtracting $\partial(x)$ from the flow corresponding to $\pi$, and adding $\partial(x)$ to any other flow in $\eff(S)$, yielding a distinct multiflow $F'$ that witnesses non-rigidity.
So we may assume in the following that all paths in $S$ are simple.

It will be convenient to assume without loss of generality that $y$ is scaled down such that
$$\sum \limits_{(q, \pi_i) \in Q} \left\langle y, (q, \pi_i)\right\rangle \le 1 \qquad \text{for any } i.$$
Let $\eff(S) = (f^1, \dots, f^k)$, and define a multiflow $F' := (f'^1, \dots, f'^k)$ where
$$f'^i := \left( \sum \limits_{(q, \pi_i) \in Q} \left\langle y, (q, \pi_i)\right\rangle \cdot  \left(\partial(q) + \left(s_i, t_i\right)\right) \right) + \left(1 - \sum \limits_{(q, \pi_i) \in Q} \left\langle y, (q, \pi_i)\right\rangle \right) \left(\partial\left(\pi_i\right) + \left(s_i, t_i\right)\right)+ \mathds{1}_{i=1} \cdot \partial\left( y^{\chi} \right);$$
that is, the flow $f'^i$ is a convex combination of the flow along $\pi_i$ and the flow along each path $q$ between the same endpoints.
The last term indicates: letting $y^{\chi}$ denote the restriction of $y$ to its indices in $\chi$, we add $\partial(y^{\chi})$ to $f'^1$, so that the cycle parts will be represented somewhere in the sum of flows.

We now claim that $F'$ witnesses non-rigidity of $\eff(S)$.
The first step is to observe that $F' \ne \eff(S)$.
Since each $\pi_i$ is a simple path, the corresponding flow has a unique path decomposition, which means it is distinct from $f'^i$ so long as the quantity
$$\sum \limits_{(q, \pi_i) \in Q} \left\langle y, (q, \pi_i)\right\rangle$$
is positive.
This must be the case for at least one $i$: if not, then $y$ is supported only on $\chi$, and since $y \ne 0$ it has nontrivial support.
So $yA = \partial(y^{\chi}) \ne 0$.

The second step is to observe the first property of (non-)rigidity, that for each $i$, the first boundaries of $f^i$ and $f'^i$ coincide.
This is a straightforward calculation from the definition of $f'^i$: we have
\begin{align*}
\partial(f'^i) &= \partial \left( \left( \sum \limits_{(q, \pi_i) \in Q} \left\langle y, (q, \pi_i)\right\rangle \left(\partial(q) + \left(s_i, t_i\right) \right) \right) + \left(1 - \sum \limits_{(q, \pi_i) \in Q} \left\langle y, (q, \pi_i)\right\rangle \right) \left(\partial\left(\pi_i\right) + \left(s_i, t_i\right)\right) + \mathds{1}_{i=1} \cdot \partial\left( y^{\chi} \right) \right)\\
&= \sum \limits_{(q, \pi_i) \in Q} \left\langle y, (q, \pi_i)\right\rangle \partial\left(s_i, t_i\right) + \left(1 - \sum \limits_{(q, \pi_i) \in Q} \left\langle y, (q, \pi_i)\right\rangle \right) \partial\left(s_i, t_i\right)\tag*{by Fact \ref{fct:bdbd}}\\
&= \partial\left(s_i, t_i\right)\\
&= \partial\left(f^i\right).
\end{align*}

The last step is to show the second property of non-rigidity, that the sum of flows coincides.
We have:
\begin{align*}
\sum \limits_{i=1}^{k} f'^i &= \sum \limits_{i=1}^{k} \left( \left( \sum \limits_{(q, \pi_i) \in Q} \left\langle y, (q, \pi_i)\right\rangle \left(\partial(q) + \left(s_i, t_i\right)\right)\right) + \left(1 - \sum \limits_{(q, \pi_i) \in Q} \left\langle y, (q, \pi_i)\right\rangle \right) \left(\partial\left(\pi_i\right) + \left(s_i, t_i\right)\right)  + \mathds{1}_{i=1} \cdot \partial\left( y^{\chi} \right)\right)\\
&= \sum \limits_{i=1}^{k} \left( \left( \sum \limits_{(q, \pi_i) \in Q} \left\langle y, (q, \pi_i)\right\rangle \left(\partial(q) - \partial(\pi_i)\right) \right) + \left(\partial\left(\pi_i\right) + \left(s_i, t_i\right)\right)  \right) + \partial\left( y^{\chi} \right)\\
&= \left(\sum \limits_{i=1}^{k} \sum \limits_{(q, \pi_i) \in Q} \left\langle y, (q, \pi_i)\right\rangle \left(\partial(q) - \partial\left(\pi_i\right)\right)\right) + \left(\sum \limits_{i=1}^{k}\partial\left(\pi_i\right) + \left(s_i, t_i\right)\right) + \partial\left( y^{\chi} \right)\\
&= \left(\partial\left( y^{\chi} \right) + \sum \limits_{(q, \pi) \in Q} \left\langle y, (q, \pi)\right\rangle \left(\partial(q) - \partial(\pi)\right)\right) + \left(\sum \limits_{i=1}^{k}\partial\left(\pi_i\right) + \left(s_i, t_i\right)\right) \\
&= \left(yA^{\chi} + yA^Q\right) + \left(\sum \limits_{i=1}^{k}\partial\left(\pi_i\right) + \left(s_i, t_i\right)\right) \tag*{by definition of $A$}\\
&= yA + \left(\sum \limits_{i=1}^{k} f^i \right)\\
&= \sum \limits_{i=1}^{k} f^i \tag*{since $yA=0$. \qedhere}
\end{align*}
\end{proof}

We now put it together:
\begin{proof} [Proof of Theorem \ref{thm:rigidflow}]
By Gordan's Lemma (\ref{lem:gordan}), there exists $w \in \rr^{\vpp}$ with $Aw > 0$ if and only if there does not exist $y \ne 0 \in \rrp^{Q \cup \chi}$ with $yA = 0$.
By Lemma \ref{lem:asm}, the former condition is equivalent to the statement that $S$ is strongly metrizable.
By Lemmas \ref{lem:arigidfor} and \ref{lem:arigidbac}, the latter condition is equivalent to the statement that $\eff(S)$ is rigid.
\end{proof}

Let us comment on one possible strengthening of Theorem \ref{thm:rigidflow}.
We have allowed the flow $F'$ witnessing non-rigidity to use arbitrary real values.
However -- because $\eff(S)$ has \emph{integer} values -- such a flow $F'$ exists iff there is one that specifically has \emph{rational} entries.
The freedom to assume this will be useful a bit later.

%
%
Finally, we note that Theorem \ref{thm:rigidflow} implies the following algorithmic corollary:
\begin{corollary} \label{cor:algorithm}
Given a path system $S = (V, \Pi)$, one can determine whether or not $S$ is strongly metrizable by solving a linear program on $m \cdot |\Pi|$ variables and $n|\Pi| + m|\Pi| + m$ constraints, where $n := |V|$, and $m$ is the number of ordered node pairs $(u, v)$ that appear consecutively on any path $\pi \in \Pi$.
\end{corollary}
\begin{proof}
We can trivially check that $S$ is simple, and reject if not.
After that check, strong metrizability is trivial if $|\Pi|=1$ (we check that the path is simple), so assume that $|\Pi| \ge 2$.
Let $G$ be the complete directed graph over vertices $V$, and let the ``capacity'' $c_{(u, v)}$ of each edge $(u, v)$ be a nonnegative integer indicating the number of times $(u, v)$ appears in any path $\pi \in \Pi$ (i.e. we count $\pi \in \Pi$ multiple times if $(u, v)$ appears in multiple places in $\pi$).
Let $P$ be the multiset of node pairs in $V$ that are the endpoints of a path $\pi \in \Pi$.

Now consider the following \emph{multicommodity flow feasibility} problem: the goal is to simultaneously push one unit of flow between each pair of nodes in $P$, obeying the constraints that each edge $(u, v)$ has exactly $c_{(u, v)}$ total flow passing through it.
We can set this up as a set of linear constraints in the standard way:
\begin{itemize}
\item We have a set of $m|\Pi|$ variables, each indicating the amount of flow for each path $\pi \in \Pi$ that is pushed along each edge of positive capacity, and
\item $n|\Pi| + m|\Pi| + m$ linear constraints, indicating that (1) incoming flow for each path is equal to outgoing flow for that path at each node (except for the source and sink of the path, at which these values differ by $1$), (2) the flow of each path on each edge is nonnegative, and (3) for each edge $(u, v)$ the total flow on that edge is exactly $c_{(u, v)}$.
\end{itemize}
These constraints are trivially feasible by the flow that pushes $1$ unit of flow corresponding to each path $\pi \in \Pi$ along the path $\pi$ itself.
By Theorem \ref{thm:rigidflow}, we have that $S$ is strongly metrizable if and only if this is the \emph{unique} way to satisfy these constraints.
It is shown in \cite{Appa02} that we can test uniqueness by solving an LP over the same constraints with a particular carefully-chosen objective function (which can be found in negligible runtime).
\end{proof}

\subsection{A Path System-Based Characterization Theorem}

Our next goal is to translate our characterization of strong metrizability back into the world of path systems, as follows:

\begin{theorem} \label{thm:rigidsystems}
A reduced path system $S = (V, \Pi)$ with $|\Pi| \ge 2$ is strongly metrizable if and only if it is simple, consistent, and there is no reduced weighted path system $S' = (V, \Pi', w') \ne S$ with $\partial(S') = \partial(S)$.
\end{theorem}

\begin{proof} [Proof of Theorem \ref{thm:rigidsystems}, Forwards Direction]
Let $S = (V, \Pi = \{\pi_1, \dots, \pi_k\})$ be a path system that is reduced and strongly metrizable, where each path $\pi_i$ has endpoints $s_i \leadsto t_i$.
Note that $S$ is simple and consistent.
The rest of the proof will proceed in contrapositive: suppose that there is a distinct reduced path system $S' = (V, \Pi', w') \ne S$ with $\partial(S') = \partial(S)$, and our goal will be to show that $S$ is not strongly metrizable.
We will do so by showing that the multiflow $\eff(S)$ is not rigid.
Let $\eff(S') =: \left(f'^1, \dots, f'^j\right)$
and define a multiflow $F'' = (f''^1, \dots, f''^k)$, where
$$f''^i := \sum \limits_{f' \in \eff(S') \text{ is an } s_i \leadsto t_i \text{ flow}} f'.$$
Additionally, for each cycle $x \in \Pi'$, we add $w'(x) \cdot \partial(x)$ to $f''^1$.
We will now prove that $F''$ witnesses non-rigidity of $\eff(S)$.\footnote{A technical detail here is that every path in $S'$ must either be a cycle or an $s_i \leadsto t_i$ path for some $i$, which thus contributes to one of the flows $f''^i$.  This is because, if there were an $s \leadsto t$ path $\pi' \in \Pi'$ where $(s, t)$ is \emph{not} one of the endpoint pairs $(s_i, t_i)$, then $\partial(S')(s, t)$ would be negative but $\partial(S)(s, t)$ would be nonnegative, contradicting that $\partial(S') = \partial(S)$.}

The first step is to observe that $F'' \ne \eff(S)$.
Since $S' \ne S$, there must be some index $i$ for which the set of $s_i \leadsto t_i$ paths in $\Pi'$ is not simply $\pi_i$, with weight $1$.\footnote{A technical detail here is that we cannot have the case where the set of $s_i \leadsto t_i$ paths in $S'$ is always just $\pi_i$ with weight $1$, but $S'$ is distinguished from $S$ by having some additional cycles.  This case is ruled out by the hypothesis $\partial(S) = \partial(S')$.}
Since $S$ is reduced, $\pi_i$ is simple.
Thus the sum of the corresponding flows in $\eff(S')$ is distinct from $f^i$.

The second step is to show that the first boundaries of the flows coincides.
We have:
\begin{align*}
\partial\left( f''^i \right) &= \partial \left( \sum \limits_{f' \in \eff(S') \text{ is an } s_i \leadsto t_i \text{ flow}} f' \right)\\
&= \partial \left( \sum \limits_{\pi' \in \Pi' \text{ is an } s_i \leadsto t_i \text{ path}} w'(\pi') \left( \partial\left(\pi'\right) + \left(s_i, t_i\right) \right) \right)\\
&= \sum \limits_{\pi' \in \Pi' \text{ is an } s_i \leadsto t_i \text{ path}} w'(\pi') \left( \partial\left(\partial(\pi')\right) + \partial\left(s_i, t_i\right) \right)\\
&= \sum \limits_{\pi' \in \Pi' \text{ is an } s_i \leadsto t_i \text{ path}} w'(\pi') \left( -s_i + t_i \right) \tag*{by Fact \ref{fct:bdbd}}\\
&= \left( \sum \limits_{\pi' \in \Pi' \text{ is an } s_i \leadsto t_i \text{ path}} w'(\pi') \right) \left( -s_i + t_i \right) \\
&= -\left\langle \partial(S'), (s_i, t_i)\right\rangle \cdot \left( -s_i + t_i \right) \tag*{since $S'$ is skip-free} \\
&= -\left\langle \partial(S), (s_i, t_i)\right\rangle \cdot \left( -s_i + t_i \right) \\
&= \left( -s_i + t_i \right) \tag*{since $S$ is skip-free}\\
&= \partial\left(f^i\right).
\end{align*}
Lastly, we need to show that the sum of flows coincide.
We have:
\begin{align*}
\sum \limits_{i=1}^k f^i &= \sum \limits_{i=1}^k \partial\left(\pi_i\right) + \left(s_i, t_i \right)\\
&= \partial\left(S\right) + \sum \limits_{i=1}^k \left(s_i, t_i \right)\\
&= \partial\left(S'\right) + \sum \limits_{i=1}^k \left(s_i, t_i \right)\\
&= \left( \sum \limits_{i=1}^j w'(\pi'_i) \partial(\pi'_i) \right) + \sum \limits_{i=1}^k \left(s_i, t_i \right) \tag*{by definition of $\partial(S')$}\\
&= \left( \sum \limits_{i=1}^j f'^i \right) - \left( \sum \limits_{i=1}^j w'(\pi'_i) \cdot (s'_i, t'_i) \right) + \sum \limits_{i=1}^k \left(s_i, t_i \right) \tag*{by definition of $f'^i$}\\
&= \left( \sum \limits_{i=1}^j f'^i \right) - \left( \sum \limits_{i=1}^k \left( \lambda(f''^i) - 1 \right) \cdot \left(s_i, t_i\right) \right) \tag*{by construction of $f''^i$}\\
&= \left( \sum \limits_{i=1}^j f'^i \right) - \left( \sum \limits_{i=1}^k \left( \lambda(f^i) - 1 \right) \cdot \left(s_i, t_i\right) \right)\\
&= \sum \limits_{i=1}^j f'^i \\
&= \sum \limits_{i=1}^k f''^i \tag*{by construction of $f''^i$. \qedhere}
\end{align*}
\end{proof}

\begin{proof} [Proof of Theorem \ref{thm:rigidsystems}, Backwards Direction]
Let $S = (V, \Pi = \{\pi_1, \dots, \pi_k\})$ be a path system that is reduced, simple, and consistent, but not strongly metrizable, and let $(s_i, t_i)$ be the endpoints of each path $\pi_i$.
By Theorem \ref{thm:rigidflow}, $\eff(S) = (f^1, \dots, f^k)$ is non-rigid, as witnessed by some multiflow
$$F' =: \left(f'^1, \dots, f'^k\right) \ne \eff(S).$$
Additionally, we may assume without loss of generality that each of these flows $f'^i \in F'$ is supported only on node pairs that appear adjacently along paths in $S$.
Since $S$ is reduced and contains an $s_i \leadsto t_i$ path for each $i$, this implies that $(s_i, t_i)$ is \emph{not} in the support of any of these flows.
Next, let $p'^i$ be a path decomposition for each $f'^i$, as in Fact \ref{fct:pathdecomp}, and let 
$$\overline{p} := \sum \limits_{i=1}^k p'_i.$$
Then, let
$$S' := \left(V, \supp\left(\overline{p}\right) =: \Pi', \overline{p}\right).$$
In other words, $S'$ is the path system whose paths are those in any of the path decompositions for $F'$, and whose weights are determined by the corresponding coefficients in $\overline{p}$.
Finally, apply the reduction steps to $S'$ so that it is reduced.

The next step is to show that $S \ne S'$.
Since $\eff(S) \ne F'$, there is an index $i$ for which the path decomposition $p'^i$ contains a path $\pi'_i \ne \pi_i$ in its support, where $\pi'_i$ is either a cycle or an $s_i \leadsto t_i$ path.
So we initially include $\pi'_i \in \Pi'$, and we need to show that a path of this form remains in $\Pi'$ even after we apply reduction steps.
There are four reduction steps to consider:
\begin{itemize}
\item Deleting isolated nodes does not affect $\pi'_i$.
\item As previously discussed, we may assume that there are no $1$- or $2$-node paths in $S'$.
Thus, no such paths are deleted.
\item Skip-free path merging cannot occur, because every path has endpoints $s_i \leadsto t_i$ for some $i$ (or it is a cycle), but the edge $(s_i, t_i)$ does not occur consecutively on any path in $S$ (since $S$ is reduced).

\item So, the only operations that change $\Pi'$ are semisimple path splitting.
This operation will leave a simple cycle in $\Pi'$, and since there are no simple cycles in $\Pi$, it preserves $S \ne S'$.
\end{itemize}
The last step is to show that $\partial(S) = \partial(S')$.
We have:
\begin{align*}
\partial(S) &= \sum \limits_{i=1}^k \partial \left( \pi_i \right) \\
&= \left( \sum \limits_{i=1}^k f^i \right) - \left( \sum \limits_{i=1}^k \left(s_i, t_i \right) \right)\\
&= \left( \sum \limits_{i=1}^k f'^i \right) - \left( \sum \limits_{i=1}^k \left(s_i, t_i \right) \right)\\
&= \left( \sum \limits_{i=1}^k \partial\left(p'_i\right) + \left(s_i, t_i\right) \right) - \left( \sum \limits_{i=1}^k \left(s_i, t_i \right) \right) \tag*{by Fact \ref{fct:pathdecomp}}\\
&= \sum \limits_{i=1}^k \partial\left(p'_i\right)\\
&= \partial\left(\overline{p}\right)\\
&= \partial\left(S'\right). \tag*{\qedhere}
\end{align*}
\end{proof}

We remark that Theorem \ref{thm:rigidsystems} also holds by the same proof in the case where $S$ is a \emph{multipath} system, i.e., it may include repeated paths in $\Pi$.
These repeated paths have no effect on strong metrizability, simplicity, or consistency.
However, repeated paths do affect the boundary, since $\partial(S)$ would count the contribution of repeated paths multiple times.

\section{Main Characterization Theorems \label{sec:simplification}}

Our goal in this section is to modify Theorem \ref{thm:rigidsystems} to find a mapping to $S, S'$ from a polyhedral pair $(T, T')$, as defined in the introduction.
We will first complete the proof in the directed setting, and then sketch how the entire argument adapts for the undirected setting.

\subsection{Forwards Direction of Directed Characterization}

Let $(T = (V, \Pi), T' = (V, \Pi'))$ be a polyhedral pair, derived from a two-colored polyhedron $Q$, and let $\phi : V \to U$ be a vertex map preserving distinctness ($\phi(T) \ne \phi(T')$).
In order to prove the first direction of Theorem \ref{thm:maindir}, our goal is to prove that $\phi(T)$ and $\phi(T')$ are not strongly metrizable.
By symmetry, it suffices to argue this for $\phi(T)$ only.
If $\phi(T)$ is not simple or consistent, then it is immediately not strongly metrizable, so we may assume in the following argument that it is.
We may also assume without loss of generality that $\phi$ is surjective, and so $\phi(T), \phi(T')$ do not contain isolated nodes.

First, we argue that $\partial(T) = \partial(T')$.
To see this, consider any pair of nodes $(u, v)$, and consider cases:
\begin{itemize}
\item Suppose that $(u, v)$ is used consecutively by a path in $\Pi \cup \Pi'$.
The structure of the two-colored polyhedron implies that each arc $(u, v)$ appears as an edge on exactly one colorful face and exactly one gray face of $Q$.
By the arc agreement property of polyhedral pairs, there is no path in $\Pi \cup \Pi'$ that has endpoints $(u, v)$.
Thus, $(u, v)$ is used consecutively by exactly one path in $\Pi$ and exactly one path in $\Pi'$, so we have
$$\langle \partial(T), (u, v) \rangle = \langle \partial(T'), (u, v) \rangle = 1.$$

\item Suppose instead that $(u, v)$ are the endpoints of a path in $\Pi \cup \Pi'$.
Arguing similarly, we will have
$$\langle \partial(T), (u, v) \rangle = \langle \partial(T'), (u, v) \rangle = -1.$$

\item If neither of the previous cases hold, then
$$\langle \partial(T), (u, v) \rangle = \langle \partial(T'), (u, v) \rangle = 0.$$
\end{itemize}

Now consider $\phi(T), \phi(T')$.\footnote{Technically, for the following argument, we need to treat these as \emph{multipath} systems: if two distinct paths are mapped together by $\phi$, they are still considered distinct.}
Since boundaries are preserved under vertex mapping, we will also have $\partial(\phi(T)) = \partial(\phi(T'))$.
Finally, we will argue that $\phi(T)$ is already reduced, and that it remains distinct from $\phi(T')$ even after $\phi(T')$ is reduced.
We consider the four reduction operations:
\begin{itemize}
\item \textbf{(Deleting an isolated node.)} We have assumed that there are no isolated nodes, so this operation will not occur.

\item \textbf{(Merging skips.)} There are no skips in $\phi(T)$, since we have assumed that it is simple and consistent.
A skip in $\phi(T')$ also may not occur, for the following reason.
A skip would occur if there is a pair of nodes $(u, v)$ that is used consecutively by a path in $\Pi'$, and another pair of nodes $(u', v')$ that is used as the endpoints of a path in $\Pi'$, and then $\phi(u) = \phi(u')$ and $\phi(v) = \phi(v')$.
However, there would then also be corresponding paths in $\Pi$ that use $(u, v)$ consecutively and $(u', v')$ as endpoints, since $\partial(T) = \partial(T')$.
So this skip would also occur in $\phi(T)$.
Thus, the skip-merging operation does not occur.

\item \textbf{(Splitting non-semisimple paths.)} We have assumed that $\phi(T)$ is simple, so this operation is not applied to $\phi(T)$.
This operation could be applied to $\phi(T')$, but after the operation $\phi(T')$ will contain a cycle.
So we would still have $\phi(T) \ne \phi(T')$, although it remains to ensure that the cycle is not then deleted in the following step.

\item \textbf{(Deleting trivial paths.)} All paths in $T, T'$ have at least three nodes, so this will remain true after the mapping.
Thus we will not delete trivial paths from $\phi(T)$.
We may, however, delete two-node paths from $\phi(T')$ in the case where they are created by a split of a non-semisimple path in the previous step.
If this occurs, observe that the number of pairs $(u, v)$ that appear adjacently on a path in $\phi(T')$ decreases by $1$ (counting repeats, and counting the case where $u=v$).
This quantity is preserved by splitting non-semisimple paths.
Thus, it will be smaller in $\phi(T')$ than in $\phi(T)$, implying distinctness.
\end{itemize}

To summarize: $\phi(T)$ is reduced and has at least two paths, and assuming that it is simple and consistent, there is a distinct reduced weighted system $\phi(T')$ with the same boundary.
Thus, by Theorem \ref{thm:rigidsystems}, $\phi(T)$ is not strongly metrizable.

\subsection{From Path Systems to Polyhedral Pairs \label{sec:polylift}}

We next prove the other direction of Theorem \ref{thm:maindir}.
We start with an arbitrary system $S = (V, \Pi)$ that is simple and consistent but not strongly metrizable, and our goal is to show that it contains the image of part of a polyhedral pair.
The following argument is standard in simplicial and cellular homology (see, e.g., \cite{LP12}, Proposition 2.1), although we will restate it from scratch since our formal objects are a bit different than usual.

\paragraph{Proof setup.}
First, we reduce $S$.
Since $S$ is simple and consistent, the only operations applied in the reduction are deleting isolated vertices and trivial paths.
The remaining system will be a subsystem of the original, and by Fact \ref{fct:pathdelete}, it will still not be strongly metrizable.

Since $S =: (V, \Pi)$ is now simple, consistent, reduced, and not strongly metrizable, by Theorem \ref{thm:rigidsystems} there is a distinct reduced weighted system $S' = (V, \Pi', w)$ with the same boundary.
Moreover, we may assume that the path weights $w$ are rational.
We then scale up the path weights in both systems $S, S'$ by a common factor, such that the weights in both systems are positive integers.
In the following, we will treat weights as \emph{multiplicities}: e.g., a path $\pi \in \Pi$ of weight $w$ will instead be treated as $w$ copies of $\pi$.
We may also cancel common paths from $S, S'$, without destroying either distinctness nor the equality $\partial(S) = \partial(S')$, so that the remaining systems have disjoint support.

We modify $S'$ in one more way before proceeding: if there are any simple cycles $c' \in \Pi'$, we attach $c'$ to any other path that shares a vertex with $c'$ (this is essentially the reverse of the semisimple modification).
Note that $S, S'$ remain distinct, since if this operation is performed at all, $S$ is reduced but $S'$ is not. 

\paragraph{Edge-matching.}

For each pair of nodes $(u, v)$ that appears consecutively on any path $\pi \in \Pi$, observe that (since $S$ is reduced and therefore skip-free) there are no $u \leadsto v$ paths, and therefore $\langle \partial(S), (u, v)\rangle$ is precisely the number of paths in $S$ (counting with multiplicity) that contain $(u, v)$ consecutively.
It follows that there are the same number of paths in $S'$ (counting with multiplicity) that use $(u, v)$ consecutively.
We may therefore fix an arbitrary matching $M[u, v]$ between the paths in $S$ and the paths in $S'$ that use $(u, v)$ consecutively.

Arguing similarly, for each pair of nodes $(u, v)$ that is the endpoints of any path $\pi \in \Pi$, the quantity $-\langle\partial(S), (u, v)\rangle$ is precisely the number of paths in both $S$ and $S'$ that have endpoints $(u, v)$.
We may again fix an arbitrary matching $M[u, v]$ between these paths.
For now, the matchings $M[u, v]$ are arbitrary.
However, as we will discuss in Section \ref{sec:cleanup}, there will be some aesthetic benefits to choosing the matchings in a particular way.

\paragraph{Construction of $T, T', \phi$.}

We are now ready to construct the polyhedral pair $T = (V^T, \Pi^T), T' = (V^T, \Pi'^T)$:

\begin{itemize}
\item Start with paths $\Pi^T, \Pi'^T$ corresponding to the paths of $\Pi, \Pi'$ respectively, but where each path is represented on a fresh set of vertices, so that the paths are currently pairwise node-disjoint.
Recall that paths in $\Pi'^T$ might be non-simple; if so, we still represent repeated vertices $v$ using a different vertex for each occurrence.
We will view these paths as cells, i.e., disjoint polygons in space.
Define $\phi : V^T \to V$ to map each vertex to its corresponding vertex in $V$.

\item In arbitrary order, consider the pairs of nodes $(u, v)$ that appear consecutively along paths in $\Pi, \Pi'$.
For each pair of paths $\pi \in \Pi, \pi' \in \Pi'$ that is matched in $M[u, v]$, glue together the occurrences of $(u, v)$ used by these respective paths.
In other words, we identify the respective nodes in $\pi, \pi'$ corresponding to $u \in V$, and identify the respective nodes in $\pi, \pi'$ corresponding to $v \in V$, and identify all points along the interval between these nodes.
Note that $\phi$ remains well-defined through this identification.

\item Similarly, consider the pairs of nodes $(u, v)$ that are the endpoints of paths in $\Pi, \Pi'$ in arbitrary order, and glue them over the matching $M[u, v]$ by the same process.
Again, note that $\phi$ remains well-defined.
\end{itemize}

This completes the construction.
The fact that
$\phi(T) = S \ne S' = \phi(T')$
is immediate from the setup, so it remains to confirm the remaining properties of polyhedral pairs promised by Theorem \ref{thm:maindir}.

\paragraph{Correctness of Construction.}

The arc agreement property, and the fact that paths in $T, T'$ occur in order (and properly respect orientation) around a face of $Q$, all follow directly from the edge-matching procedure in the construction.
Compactness (that the resulting manifold contains only finitely many cells, each of which is only finitely large) follows from finiteness of $\Pi^T, \Pi'^T$.
The main correctness property that needs proof is that the glued surface is topologically a 2-manifold without boundary.
See Figure \ref{fig:pinwheel} for a picture of the following argument on this point.

\boldmath
\begin{figure}[h]
\begin{center}
\begin{tikzpicture}

\draw [fill=gray!50] (2, 0) -- (0, 0) -- (1.414, 1.414);
\draw [fill=red!60] (1.414, 1.414) -- (0, 0) -- (0, 2);
\draw [fill=gray!50] (0, 2) -- (0, 0) -- (-1.414, 1.414);
\draw [fill=gold!60] (-1.414, 1.414) -- (0, 0) -- (-2, 0);
\draw [fill=gray!50] (-2, 0) -- (0, 0) -- (-1.414, -1.414);
\draw [fill=green!60] (-1.414, -1.414) -- (0, 0) -- (0, -2);
\draw [fill=gray!50] (0, -2) -- (0, 0) -- (1.414, -1.414);
\draw [fill=blue!60] (1.414, -1.414) -- (0, 0) -- (2, 0);

\draw [fill=black] (0, 0) circle [radius=0.3cm];

\draw [fill=black] (2, 0) circle [radius=0.15cm];
\draw [fill=black] (-2, 0) circle [radius=0.15cm];
\draw [fill=black] (0, 2) circle [radius=0.15cm];
\draw [fill=black] (0, -2) circle [radius=0.15cm];

\draw [fill=black] (1.414, 1.414) circle [radius=0.15cm];
\draw [fill=black] (-1.414, 1.414) circle [radius=0.15cm];
\draw [fill=black] (1.414, -1.414) circle [radius=0.15cm];
\draw [fill=black] (-1.414, -1.414) circle [radius=0.15cm];

\node at (0.924, 0.383) {$\pi'_0$};
\node at (0.383, 0.924) {$\pi_1$};
\node at (-0.383, 0.924) {$\pi'_1$};
\node at (-0.924, 0.383) {$\pi_2$};

\node at (-0.924, -0.383) {$\pi'_2$};
\node at (-0.383, -0.924) {$\pi_3$};
\node at (0.383, -0.924) {$\pi'_3$};
\node at (0.924, -0.383) {$\pi_0$};

\draw [ultra thick, >->] (1.697, 1.697) -- (0, 0) -- (2.4, 0);
\draw [ultra thick, >->] (1.697, 1.697) -- (0, 0) -- (0, 2.4);
\draw [ultra thick, >->] (-1.697, 1.697) -- (0, 0) -- (0, 2.4);
\draw [ultra thick, >->] (-1.697, 1.697) -- (0, 0) -- (-2.4, 0);
\draw [ultra thick, >->] (-1.697, -1.697) -- (0, 0) -- (-2.4, 0);
\draw [ultra thick, >->] (-1.697, -1.697) -- (0, 0) -- (0, -2.4);
\draw [ultra thick, >->] (1.697, -1.697) -- (0, 0) -- (0, -2.4);
\draw [ultra thick, >->] (1.697, -1.697) -- (0, 0) -- (2.4, 0);

\node [white] at (0, 0) {$v$};
\node [right] at (2, -0.3) {$x_0$};
\node [right] at (0.1, 2) {$x_1$};
\node [left] at (-2, -0.3) {$x_2$};
\node [left] at (-0.1, -2) {$x_3$};

\node [right=0.1cm] at (1.414, 1.414) {$u_0$};
\node [left=0.1cm] at (-1.414, 1.414) {$u_1$};
\node [left=0.1cm] at (-1.414, -1.414) {$u_2$};
\node [right=0.1cm] at (1.414, -1.414) {$u_3$};

\end{tikzpicture}
\end{center}

\caption{\label{fig:pinwheel} At each glued node $v$, the paths from $\Pi^T, \Pi'^T$ that use $v$ have been glued in a circular order as in this picture, implying flat topology at $v$.}
\end{figure}
\unboldmath

For each ordered pair of nodes $(u, v)$ that occurs adjacently clockwise around a cell corresponding to a path $\pi$ (either because $(u, v) \in \pi$ or because $\pi$ has endpoints $(v, u)$), we have glued it to exactly one pair on a cell $\pi'$.
Thus, the glued surface is locally flat at every point in the \emph{interior} of a cell or arc.
However, it still remains to show that the glued surface is also locally flat at its vertices, where several sides may be glued together.

Consider a vertex $v$ in the final glued surface, and let $\Pi_v, \Pi'_v$ be the subsets of paths from $\Pi^T, \Pi'^T$ (respectively) that contain $v$.
Choose a path $\pi_0 \in \Pi_v$ arbitrarily, and let $x_0$ be the node immediately following $v$ on $\pi_0$ (also considered before gluing; if $v$ is the last node on $\pi_0$ then we take $x_0$ to instead be the first node of $\pi_0$).
By construction, the arc $(v, x_0)$ is glued to an arc on some path $\pi'_0 \in \Pi'_v$.
This operation identifies the copies of $v$ and $x_0$ in $\pi_0, \pi'_0$.
Then, let $u_0$ be the node immediately preceding $v$ on $\pi'_0$ (or let $u_0$ be the last node on $\pi'_0$, if $v$ is the first).
Similarly, the arc $(u_0, v)$ was glued to one path in $\Pi_v$, and we may call the following node on that path $x_1$.
Since each arc $(v, x_i)$ or $(u_j, v)$ is \emph{uniquely} glued between two paths, we may repeat this process to uniquely generate a circular order of paths
$$\left(\pi_0, \pi'_0, \pi_1, \pi'_1, \dots, \pi'_{k-1}, \pi_k=\pi_0\right)$$
and vertices
$$\left(x_0, u_0, x_1, \dots, u_{k-1}, x_k=x_0\right),$$
halting once we repeat a path $\pi_0=\pi_k$.
Indeed, as this notation suggests, note that the \emph{first} path $\pi_0$ must be the one that is repeated.
This follows because each intermediate path $\pi_i$ has two edges incident to $v$, which have been glued to its adjacent paths $\pi'_{i-1}$ and $\pi'_i$, and hence were not also glued to $\pi'_{k-1}$.

This is a complete list of the paths whose copies of $v$ are identified in the final surface, and so all paths in $\Pi_v$ and $\Pi'_v$ must appear in this list.
The circular ordering of the paths implies that the surface has locally flat topology at $v$.

Finally, we remark that it is possible at this point that the construction has produced a \emph{disconnected} manifold.
In this case, one can discard all but one connected component without issue.
If desired, the formal correctness of this discarding follows from Section \ref{sec:cleanup} to follow (specifically Lemma \ref{lem:sss}).

\subsection{Undirected Setting}

For the most part, our results extend immediately from the directed setting to the undirected setting.
Indeed, our previous characterization theorems, such as Theorem \ref{thm:rigidflow}, extend immediately with respect to undirected rather than directed flows.
The only required change in the proof(s) is that we operate over the quotient space of $\rr^{\vpp}$ (or $\qq^{\vpp}$) in which we have $(u, v) = (v, u)$ for all pairs of distinct nodes $u, v$.

While the argument in the previous section still implies locally flat topology, the lack of direction on paths means that the choice of node $x_0$ following $v$ involves an arbitrary choice (i.e., we could choose $x_0$ to be either of the two nodes adjacent to $v$ on the path $\pi_0$).
The lack of path directions also means that the faces are not necessarily orientable, and so the resulting manifold may be non-orientable.

\subsection{Cleanup and Simplification \label{sec:cleanup}}

We will now demonstrate some additional properties that can be safely assumed for the polyhedral pairs in Theorems \ref{thm:maindir} and \ref{thm:mainintround}, if desired.
The following property of \emph{shadow-safety} is a technical but ultimately natural property one might like the vertex map $\phi$ to satisfy; we will see in a moment that it can be forced, and then that it allows us to simplify our polyhedral pairs, to a point.

\begin{definition}[Shadow-Safety]
Given path systems $S=(V_S, \Pi_S), T=(V_T, \Pi_T)$, a vertex map $\phi : V_T \to V_S$ is \emph{shadow-safe} if it has the following property.
For all paths $\pi \in \Pi_T$, and for all paths $q = (v_0, v_1, \dots, v_k)$ over $V_T$ with the same pair of endpoints as $\pi$ and with the property that every pair of adjacent vertices $(v_i, v_{i+1})$ appears adjacently in some path in $\Pi_T$, we have
$$\text{there exists } \pi' \in \Pi_S \text{ with } \phi(q) \subseteq \pi' \qquad \text{if and only if} \qquad \pi = q.$$
\end{definition}

In other words, shadow-safety says that $\phi$ maps each path $\pi \in \Pi_T$ to a subpath of a path from $\Pi_S$, and also it maps each path that competes with $\pi$ to a non-subpath.
We show next that it may be enforced, alongside some basic simplifications:


\begin{lemma} \label{lem:ssprops}
In the construction of $(T, T')$ in Section \ref{sec:polylift}, suppose that the choice of matchings $\{M[u, v]\}$ is selected in such a way that we maximize the final number of vertices $|V_T|$, among all valid choices of matchings.
Then:
\begin{itemize}
    \item there is no parallel edge; i.e., no two paths in $T$ use the same adjacent pair of nodes $(u, v)$,
    \item the map $\phi:V_T\to V_S$ is shadow-safe.
\end{itemize}
\end{lemma}
\begin{proof}
We will establish both properties by arguing that, if they are violated, then there is a way to swap choices of matchings for some class $M[u, v]$ in a way that increases the final number of nodes $|V_T|$.

\paragraph{Parallel edges.}
Suppose for contradiction that an ordered pair $(u,v)$ occurs consecutively on two distinct paths in $\Pi_T$.
Let $e$ and $f$ denote the two corresponding parallel $(u, v)$ edges in the polyhedron.
Since $\phi(e) = \phi(f)$, the edges $e$ and $f$ belong to the same matching class $M[\phi(u),\phi(v)]$, and thus we have the freedom to exchange the choice of gray edges matched to these two edges.
After this exchange, we will no longer identify the copies of $u$ and $v$ used by $e$ and $f$, and so the number of nodes $|V_T|$ increases, contradicting maximality.

\paragraph{Shadow-safety.}
Let $\pi=(x_0,x_1,\ldots,x_k) \in \Pi_T$.
By construction, its image $\phi(\pi)$ is exactly equal to some path $\pi_S \in \Pi_S$, thus witnessing the forward direction of shadow-safety.
For the reverse direction, let
\[
    q=(x_0=y_0,y_1,\ldots,y_\ell=x_k)\neq\pi
\]
be a competing path with the same endpoints as $\pi$, and suppose for contradiction that $\phi(q)$ is a subpath of some $\pi'_S\in\Pi_S$.
The paths $\pi_S$ and $\pi'_S$ each contain the images of the two common endpoints $(x_0=y_0, x_k=y_{\ell})$ in the same order.
Since $S$ is simple and consistent, the contiguous subpaths of $\pi_S$ and $\pi'_S$ between these nodes agree.
In particular, this implies that $\ell=k$; that is, these subpaths have the same number of nodes.

Since $q\neq\pi$, we may choose two indices $i, j$ with $i < j$ such that $x_i=y_i$ and $x_j=y_j$, but where the paths are internally node-disjoint on their subpaths between these nodes.
For each index $i \le r \le j-1$, let $e_r = (x_r, x_{r+1})$ be the colorful edge used by $\pi$, and let $f_r = (y_r, y_{r+1})$ be the distinct colorful edge used by $q$.
Note that for all $i \le r \le j-1$, the $r^{th}$ edges $(\phi(x_r)=\phi(y_r), \phi(x_{r+1})=\phi(y_{r+1}))$ belong to the same matching class, and so we may switch their assigned partners in the matching.
We will next argue that this structure provides an opportunity for us to perform matching switches to increase $|V_T|$, again contradicting maximality.

Start by switching the matched partners of $e_i$ and $f_i$, noting that these edges have a common start node $x_i=y_i$, but they may have distinct end nodes.
The switch will place their start nodes in different circular groups, thus causing them to split into two distinct nodes in $T$.
If the switch does \emph{not} also cause their distinct end nodes to merge into one, or in the case where their end nodes were already identical, then we have increased $|V_T|$ by $1$, reaching contradiction.
So we may assume that the switch causes their end nodes to merge.
We may then continue by switching $e_{i+1}$ and $f_{i+1}$, and so on.
Eventually, we will switch $e_{j-1}$ and $f_{j-1}$, which have common end nodes $x_j=y_j$, which increases $|V_T|$ and reaches contradiction.
\end{proof}

\begin{lemma}
Let $(T,T')$ and $\phi:V_T\to V_S$ be obtained from the construction in Section \ref{sec:polylift}.
By splitting nodes, we may obtain another polyhedral pair $(\widehat T,\widehat T')$ and vertex map $\widehat\phi:V_{\widehat T}\to V_S$ such that every face of
the resulting polyhedron is simple and contains at least three nodes.
Moreover, if initially $T, \phi$ satisfy the properties in Lemma \ref{lem:ssprops}, then so do $\widehat T, \widehat\phi$.
\end{lemma}
\begin{proof}
By construction, the paths $\pi \in \Pi_T$ have the property that $\phi(\pi)=\pi_S$ for some $\pi_S \in \Pi_S$.
Since $S$ is simple, this implies that $\pi_S$ is simple, and so $\pi$ must be simple as well.
Additionally, since $S$ is reduced, $\pi_S$ must have at least three distinct nodes, and so $\pi$ has at least three distinct nodes as well.
It thus suffices to focus on the paths $\pi' \in \Pi_{T'}$, which may not be simple.

Consider a non-simple path $\pi' \in \Pi'_T$ that repeats a node $v$.
Then $\pi'$ contains a contiguous subpath $C$ that is a simple cycle starting and ending at $v$.
Split $v$ into two new nodes $v_1, v_2$, with these consecutive instances of $v$ replaced by $v_1, v_2$ respectively.
The gray edges incident to $v$, which we may call $(v, u_1), (u_2, v) \in C$, are thus replaced by $(v_1, u_1), (u_2, v_2)$.
We also replace occurrences of $v$ in the matched colorful edges so that the matchings do not need to be re-selected.
In other words:
\begin{itemize}
\item The colorful edge matched to $(v, u_1)$ is also replaced by $(v_1, u_1)$, and the colorful edge matched to $(u_2, v)$ is also replaced by $(u_2, v_2)$,
\item For each other node $x$ with an edge $(x, v)$, we replace all gray and colorful edges of this form with the same copy, either $(x, v_1)$ or $(x, v_2)$.
The copy is selected in such a way that the manifold structure is preserved; for example, all edges that remain cyclically attached to $(v, u_1)$ after the split would use $v_1$.
\end{itemize}
Repeat this splitting process until $\Pi'_T$ is simple; the fact that each corresponding gray face has at least three nodes follows from the same argument as before.
Note that we will terminate in a finite number of steps, since each vertex split removes one instance of a repeated node in a path in $\Pi'_T$.

It is immediate that each split preserves the properties of polyhedral pairs, and that splitting cannot create parallel edges (as resolved by the previous lemma).
It remains to prove that the vertex map $\phi$ remains shadow-safe after each splitting operation.
The forwards direction of shadow-safety is unaffected by splitting, so it is again immediate from the fact that $\phi$ is initially shadow-safe.

For the reverse direction, let $\pi \in \Pi_T$ (after splitting) and let $q$ be a competing path for
$\pi$.
Since $\phi$ was previously shadow-safe, we have that $\phi(q)$ is not a subpath of any path in $\Pi_S$, except perhaps for the case where $\pi, q$ differ only on the node that was just split: one uses $v_1$, the other uses $v_2$, but the paths are otherwise identical.
That is, without loss of generality, we have
$$\pi=(\ldots,u,v_1,w,\ldots)  \qquad\text{and}\qquad q=(\ldots,u,v_2,w,\ldots).$$
To rule out this remaining case, note that we would have colorful edges $(u, v_1)$ and $(u, v_2)$, but by construction all colorful edges $(u, v)$ would be replaced by the same copy of $v$.
\end{proof}

Let us now discuss the favorable consequences of shadow-safety.
The following lemmas show that it interacts nicely with strong metrizability:

\begin{lemma} \label{lem:sss}
For path systems $T = (V_T, \Pi_T), S = (V_S, \Pi_S)$ where $S$ is simple and consistent, if a map $\phi : V_T \to V_S$ is shadow-safe, then for any subsystem $R = (V_R, \Pi_R) \subseteq T$, the restriction of $\phi$ to $V_R$ is also shadow-safe.
\end{lemma}
\begin{proof}
We will argue that the subpath operations all preserve shadow-safety:
\begin{itemize}
\item If we delete a path $\pi \in \Pi_T$, shadow-safety is preserved trivially.
\item If we delete a node $v \in V_T$, we will clearly still have the property that for all $\pi_T \in \Pi_T$, $\phi(\pi_T)$ is a subpath of some path in $\Pi_S$.
Moreover, for any competing path $q$, there will be a corresponding path $q'$ of the same length after the deletion, so $\phi$ will remain shadow-safe.

\item The more involved operation is when we split a path $\pi_T \in \Pi_T$ into two contiguous subpaths $\pi_1, \pi_2$.
Since $\phi(\pi_T)$ is a subpath of some path $\pi_S \in \Pi_S$, we immediately have that $\phi(\pi_1), \phi(\pi_2)$ are subpaths of $\pi_S$, as desired.
Additionally, let $q \ne \pi_1$ be a path that competes with $\pi_1$.
Observe that:
\begin{itemize}
\item The concatenation $q \circ \pi_2 \ne \pi$ competes with $\pi$, and hence it is not a subpath of $\pi_S$.
Since $\phi(\pi_2)$ is a subpath of $\pi_S$, it follows that $\phi(q)$ is not a subpath of the prefix of $\pi_S$ up to the endpoint of $\phi(q)$.

\item Additionally, we claim that $\phi(q)$ cannot be a subpath of any other path $\pi'_S \in \Pi_S$.
If it were, then $\pi_S, \pi'_S$ would violate the consistency of $S$.
\end{itemize}
This shows that no distinct path $q$ that competes with $\pi_1$ has the property that $\phi(q)$ is a subpath of any path in $\Pi_S$.
A similar argument applies to paths that compete with $\pi_2$. \qedhere
\end{itemize}
\end{proof}

\begin{lemma} \label{lem:ssinherit}
If $T = (V_T, \Pi_T)$ is not strongly metrizable, and there exist $S = (V_S, \Pi_S)$ and a shadow-safe map $\phi : V_T \to V_S$, then $S$ is also not strongly metrizable.
\end{lemma}
\begin{proof}
We show the contrapositive.
Suppose $S = (V_S, \Pi_S)$ is strongly metrizable, as witnessed by a graph $G = (V_S, E, w)$.
Define a graph $G' = (V_T, E', w')$ by including each edge $(u, v) \in E'$ that appears consecutively in any path in $\Pi_T$, and where $w(u, v) := \dist_G(\phi(u), \phi(v))$.
We will now argue that $G'$ witnesses strong metrizability of $T$.

Consider a path $\pi = (v_0, v_1, \dots, v_k) \in \Pi_T$, and any alternate path $q = (v_0=v'_0, v'_1, \dots, v'_j=v_k)$ through $G'$ with the same endpoints as $\pi$.
Our goal is to show that $q$ is longer than $\pi$ in $G'$.
Observe that:
\begin{itemize}
\item We have $$w'(\pi) = \sum \limits_{0 \le i \le k-1} \dist_G(\phi(v_i), \phi(v_{i+1})) = \dist_G(\phi(v_0), \phi(v_k)),$$
where the latter equality follows since $\phi(\pi)$ is a subpath of a path in $\Pi_S$, and hence the vertices of $\phi(\pi)$ lie in order along a shortest path in $G$.

\item We have $$w'(q) = \sum \limits_{0 \le i \le j-1} \dist_G(\phi(v'_i), \phi(v'_{i+1})) > \dist_G(\phi(v'_0), \phi(v'_j)) = \dist_G(\phi(v_0), \phi(v_k)),$$ where the strict inequality follows since (by shadow-safety of $\phi$) $\phi(q)$ is not a subpath of any path in $\Pi_S$, and hence the vertices of $\phi(q)$ do not lie in order along the shortest path between its endpoints $(\phi(v_0), \phi(v_k))$.
\end{itemize}
Thus $w'(q) > w'(\pi)$, completing the proof.
\end{proof}

As a consequence, we may enforce \emph{minimality} of our obstructions, to a point:

\begin{theorem} [Reframing of Main Characterization]
A path system $S = (V_S, \Pi_S)$ is strongly metrizable (in the directed or undirected settings) if and only if it is simple and consistent, and there is no polyhedral pair $(T = (V_T, \Pi_T), T')$ and shadow-safe map $\phi : V_T \to V_S$.
Moreover, we may assume that $T$ is minimal, in the sense that no proper subsystem $R \subsetneq T$ participates in a polyhedral pair $(R, R')$.
\end{theorem}
\begin{proof}
Following the proof from Section \ref{sec:polylift}, we construct $(T, T')$ and (by Lemma \ref{lem:ssprops}) we may also have that $\phi$ is shadow-safe.
By Lemma \ref{lem:sss}, if there exists a polyhedral pair $(R, R')$ with $R \subsetneq T$, then $\phi$ is also a shadow-safe map from $R$ to $S$.
Thus it suffices to use $R$ as an obstruction in place of $T$.
\end{proof}

\subsection{The Directed Acyclic Setting}

In order to study strong metrizability in DAGs, we can define acyclic path systems as follows:

\begin{definition} [Acyclic Path Systems]
A path system $S = (V, \Pi)$ is \emph{acyclic} if we can choose a total ordering of its nodes (called a \emph{topological ordering}) such that the nodes in each $\pi \in \Pi$ appear in this order.\footnote{An equivalent definition is that $S$ is acyclic iff it does not contain a ``directed cycle'' -- that is, a system consisting of $k$ nodes and $k$ paths with two nodes each, arranged in a directed cycle -- as a subsystem.}
\end{definition}

These are a special case of directed path systems, and hence our directed characterization of strong metrizability still applies.
However, we can simplify the characterization a bit:

\begin{corollary}
Let $S = (V, \Pi)$ be a directed acyclic strongly metrizable path system, and let $S' = (V, \Pi')$ be a directed path system where each $\pi' \in \Pi'$ is obtained by taking circular shifts of the node ordering of some corresponding path $\pi \in \Pi$.
Then $S'$ is strongly metrizable if and only if it is consistent.
\end{corollary}
\begin{proof}
Let $Q$ be a two-colored abstract polyhedron.
Let $(T, T')$ be path systems that respectively map to the colorful and gray faces of $Q$, as in the definition of polyhedral pairs, but without necessarily satisfying the last arc-agreement condition of polyhedral pairs.
We next argue that $(T, T')$ satisfy the last arc agreement property, forming a polyhedral pair, if and only if they contain the same number of paths as each other:
\begin{itemize}
\item Suppose that $T, T'$ contain the same number of paths as each other.
Then no face of $Q$ may have two distinct endpoint edges on its boundary, since this would imply that some other face of $Q$ has no endpoint edges on its boundary, implying a cycle.
Thus we cannot have a node pair $(u, v)$ that is used as both an endpoint and non-endpoint edge of $Q$, implying that $(T, T')$ satisfy arc agreement.

\item Conversely, suppose that $(T, T')$ form a polyhedral pair, satisfying the last arc agreement condition as well.
This implies that each face of $Q$ contains at most one endpoint edge on its boundary.
Moreover, since neither $T$ nor $T'$ may contain a cycle, each face of $Q$ contains at least one endpoint edge on its boundary.
So each face of $Q$ has exactly one endpoint edge on its boundary, implying that these edges induce a perfect matching between the gray and colorful faces of $Q$.
Thus $T$ and $T'$ have the same number of paths.
\end{itemize}

So, in the acyclic setting, we may replace the last condition in the definition of polyhedral pairs with ``$T$ and $T'$ have the same number of paths."
With this change, the definition is invariant to circular shifts in path orders.
Since $S$ is strongly metrizable, it therefore avoids images of polyhedral pairs as subsystems, and so $S'$ does as well.
So $S'$ is strongly metrizable if and only if it remains consistent.
\end{proof}

\section*{Acknowledgments}

I have quite a lot of people to thank for useful technical discussions on the ideas of this paper.
In no particular order, these people are: Virginia Vassilevska Williams, Michael Cohen, Josh Alman, Dylan McKay, Andrea Lincoln, Jerry Li, Ofer Grossman, Govind Ramnarayan, Aviv Adler, Sitan Chen, Atish Agarwala, and Noga Alon.
I am grateful to Hangyu Xu and Gary Hoppenworth for independently alerting me to a bug in a previous draft of the paper.
I am also grateful to Ryan Williams and an anonymous reviewer for writing advice that has improved the presentation of this work.

\FloatBarrier
\bibliographystyle{plain}
\bibliography{references}

@article{LP12,
  title={Comparing seminorms on homology},
  author={Lafont, Jean-Fran{\c{c}}ois and Pittet, Christophe},
  journal={Pacific Journal of Mathematics},
  volume={259},
  number={2},
  pages={373--385},
  year={2012},
  publisher={Mathematical Sciences Publishers}
}

@article{CL22,
  title={Geodesic geometry on graphs},
  author={Cizma, Daniel and Linial, Nati},
  journal={Discrete \& Computational Geometry},
  volume={68},
  number={1},
  pages={298--347},
  year={2022},
  publisher={Springer}
}

@article{CL23,
  title={Irreducible nonmetrizable path systems in graphs},
  author={Cizma, Daniel and Linial, Nati},
  journal={Journal of Graph Theory},
  volume={102},
  number={1},
  pages={5--14},
  year={2023},
  publisher={Wiley Online Library}
}

@article{CL26,
  title={Metric Approximations of Consistent Path Systems},
  author={Cizma, Daniel and Linial, Nati},
  journal={arXiv preprint arXiv:2601.21982},
  year={2026}
}

@article{CCL25,
  title={Strictly Metrizable Graphs are Minor-Closed},
  author={Chudnovsky, Maria and Cizma, Daniel and Linial, Nati},
  journal={arXiv preprint arXiv:2501.08277},
  year={2025}
}

@article{CCL26,
  title={The structure of metrizable graphs},
  author={Chudnovsky, Maria and Cizma, Daniel and Linial, Nati},
  journal={Discrete \& Computational Geometry},
  volume={75},
  number={1},
  pages={73--92},
  year={2026},
  publisher={Springer}
}

@inproceedings{BHT23,
  title={Bridge girth: A unifying notion in network design},
  author={Bodwin, Greg and Hoppenworth, Gary and Trabelsi, Ohad},
  booktitle={2023 IEEE 64th Annual Symposium on Foundations of Computer Science (FOCS)},
  pages={600--648},
  year={2023},
  organization={IEEE}
}

@InProceedings{BBW24,
  author =	{Bernstein, Aaron and Bodwin, Greg and Wein, Nicole},
  title =	{{Are There Graphs Whose Shortest Path Structure Requires Large Edge Weights?}},
  booktitle =	{15th Innovations in Theoretical Computer Science Conference (ITCS 2024)},
  pages =	{12:1--12:22},
  series =	{Leibniz International Proceedings in Informatics (LIPIcs)},
  year =	{2024}
}

@article{BV21,
  title={Better distance preservers and additive spanners},
  author={Bodwin, Greg and Williams, Virginia Vassilevska},
  journal={ACM Transactions on Algorithms (TALG)},
  volume={17},
  number={4},
  pages={1--24},
  year={2021},
  publisher={ACM New York, NY}
}

@article{Bodwin21,
  title={New results on linear size distance preservers},
  author={Bodwin, Greg},
  journal={SIAM Journal on Computing},
  volume={50},
  number={2},
  pages={662--673},
  year={2021},
  publisher={SIAM}
}

@Article{CE06,
  author        = {Coppersmith, Don and Elkin, Michael},
  title         = {Sparse sourcewise and pairwise distance preservers},
  journal       = {SIAM Journal on Discrete Mathematics},
  year          = {2006},
  volume        = {20},
  number        = {2},
  pages         = {463--501},
  publisher     = {SIAM},
}

@article{CDKL17,
  title={Approximating spanners and directed steiner forest: Upper and lower bounds},
  author={Chlamt{\'a}{\v{c}}, Eden and Dinitz, Michael and Kortsarz, Guy and Laekhanukit, Bundit},
  journal={ACM Transactions on Algorithms (TALG)},
  volume={16},
  number={3},
  pages={1--31},
  year={2020},
  publisher={ACM New York, NY, USA}
}

@Article{BCE05,
  author    = {Bollob{\'a}s, B{\'e}la and Coppersmith, Don and Elkin, Michael},
  title     = {Sparse distance preservers and additive spanners},
  journal   = {SIAM Journal on Discrete Mathematics},
  year      = {2005},
  volume    = {19},
  number    = {4},
  pages     = {1029--1055},
  publisher = {SIAM},
}

@InProceedings{GR17,
  author    = {Kshitij Gajjar and Jaikumar Radhakrishnan},
  title     = {{Distance-Preserving Subgraphs of Interval Graphs}},
  booktitle = {25th Annual European Symposium on Algorithms (ESA 2017)},
  year      = {2017},
  editor    = {Kirk Pruhs and Christian Sohler},
  volume    = {87},
  series    = {Leibniz International Proceedings in Informatics (LIPIcs)},
  publisher = {Schloss Dagstuhl--Leibniz-Zentrum fuer Informatik},
  isbn      = {978-3-95977-049-1},
  pages     = {39:1--39:13},
  doi       = {10.4230/LIPIcs.ESA.2017.39},
  url       = {http://drops.dagstuhl.de/opus/volltexte/2017/7879},
  address   = {Dagstuhl, Germany},
  issn      = {1868-8969},
  urn       = {urn:nbn:de:0030-drops-78798},
}

@misc{tttrung2006kleinbottle,
  author       = {{Tttrung}},
  title        = {{Klein bottle.svg}},
  year         = {2006},
  month        = jul,
  howpublished = {\url{https://commons.wikimedia.org/wiki/File:Klein_bottle.svg}},
  note         = {Wikimedia Commons; own work. Image available under CC BY-SA 3.0, CC BY-SA 2.5/2.0/1.0, or GFDL 1.2+. Accessed 5 June 2026}
}

@Article{GLMY13,
  title={Path complexes and their homologies},
  author={Grigor’yan, Alexander A and Lin, Yong and Muranov, Yu V and Yau, Shing-Tung},
  journal={Journal of Mathematical Sciences},
  volume={248},
  number={5},
  pages={564--599},
  year={2020},
  publisher={Springer}
}

@InProceedings{CM18,
  author       = {Chowdhury, Samir and M{\'e}moli, Facundo},
  title        = {Persistent Path Homology of Directed Networks},
  booktitle    = {Proceedings of the Twenty-Ninth Annual ACM-SIAM Symposium on Discrete Algorithms},
  year         = {2018},
  organization = {SIAM},
  pages        = {1152--1169},
}

@Article{GLMY14,
  author  = {Grigor'yan, Alexander and Lin, Yong and Muranov, Yuri and Yau, Shing-Tung},
  title   = {Homotopy theory for digraphs},
  journal = {Pure Appl. Math Quarterly},
  number  = {4},
  pages   = {619--674},
  volume  = {10},
  year    = {2014},
}

@Article{GLMY15,
  author  = {Grigor’yan, Alexander and Lin, Yong and Muranov, Yuri and Yau, Shing-Tung},
  title   = {Cohomology of digraphs and (undirected) graphs},
  journal = {Asian J. Math},
  year    = {2015},
  volume  = {19},
  number  = {5},
  pages   = {887--932},
}

@Article{Wu17,
  author    = {Wu, Hao},
  title     = {{K}hovanov--{R}ozansky homology and directed cycles},
  journal   = {Journal of Algebraic Combinatorics},
  year      = {2017},
  volume    = {46},
  number    = {2},
  pages     = {403--444},
  publisher = {Springer},
}

@Article{Appa02,
  author    = {Appa, Gautam},
  title     = {On the uniqueness of solutions to linear programs},
  journal   = {Journal of the Operational Research Society},
  year      = {2002},
  volume    = {53},
  number    = {10},
  pages     = {1127--1132},
  publisher = {Taylor \& Francis},
}

@techreport{Tol21,
  author      = {Tol, Paul},
  title       = {Colour Schemes},
  institution = {SRON},
  type        = {Technical Note},
  number      = {SRON/EPS/TN/09-002},
  year        = {2021},
  month       = aug,
  note        = {Issue 3.2, 18 August 2021},
  url         = {https://sronpersonalpages.nl/~pault/data/colourschemes.pdf},
}

@InProceedings{CGMW18,
  author =	{Hsien-Chih Chang and Pawel Gawrychowski and Shay Mozes and Oren Weimann},
  title =	{{Near-Optimal Distance Emulator for Planar Graphs}},
  booktitle =	{26th Annual European Symposium on Algorithms (ESA 2018)},
  pages =	{16:1--16:17},
  series =	{Leibniz International Proceedings in Informatics (LIPIcs)},
  year =	{2018},
  volume =	{112},
  URL =		{http://drops.dagstuhl.de/opus/volltexte/2018/9479},
  doi =		{10.4230/LIPIcs.ESA.2018.16}
}

\end{document}